\newtheorem{thm}{Theorem}[section]
\newtheorem{cor}[thm]{Corollary}
\newtheorem{lem}[thm]{Lemma}
\newtheorem{prop}[thm]{Proposition}
\theoremstyle{definition}
\newtheorem{defn}[thm]{Definition}
\theoremstyle{remark}
\newtheorem{rem}[thm]{Remark}
\newtheorem{exa}[thm]{Example}
\numberwithin{equation}{section}
\newcommand{\set}[1]{\left\{#1\right\}}
\newcommand{\Real}{\mathbb R}
\newcommand{\Natural}{\mathbb N}
\newcommand{\Natura}{\mathbb{N} \cup \{ \infty \} }
\newcommand{\B}{\mathcal{B}}
\newcommand{\such}{\, | \,}
\newcommand{\limn}{\lim_{n \to \infty}}
\newcommand{\Pre}{\mathcal{P}}
\newcommand{\prob}{\mathbb{P}}
\newcommand{\oprob}{\overline{\mathbb{P}}}
\newcommand{\plim}{\oprob \text{-} \lim}
\newcommand{\plimn}{\plim_{n \to \infty}}
\newcommand{\climn}{\mathsf{C} \text{-} \lim_{n \to \infty}}
\newcommand{\ucp}{\mathsf{uc} \oprob \text{-} \lim}
\newcommand{\ucpn}{\ucp_{n \to \infty}}
\newcommand{\ucpeps}{\ucp_{\epsilon \downarrow 0}}
\newcommand{\oexpec}{\overline{\mathbb{E}}}
\newcommand{\Lb}{\mathbf{L}}
\newcommand{\F}{\mathcal{F}}
\newcommand{\G}{\mathcal{G}}
\newcommand{\cadlag}{c\`adl\`ag}
\newcommand{\ud}{\, \mathrm d}
\newcommand{\inner}[2]{\left \langle #1 , \, #2 \right \rangle}
\newcommand{\innerw}[2]{\left \langle #1 , #2 \right \rangle}
\newcommand{\num}{num\'eraire}
\newcommand{\X}{\mathcal{X}}
\newcommand{\pr}{\mathsf{proj}}
\newcommand{\oS}{\overline{\mathcal{S}}}
\newcommand{\oSn}{\overline{\mathcal{S}} \text{-} \limn}
\newcommand{\oSeps}{\overline{\mathcal{S}} \text{-} \lim_{\epsilon \downarrow 0}}
\newcommand{\ta}{\widetilde{a}}
\newcommand{\oa}{\overline{a}}
\newcommand{\tphi}{\widetilde{\varphi}}
\newcommand{\ophi}{\overline{\varphi}}
\newcommand{\og}{\overline{\mathfrak{g}}}
\newcommand{\tg}{\widetilde{\mathfrak{g}}}
\newcommand{\dist}{\mathsf{dist}}
\newcommand{\tX}{\widetilde{X}}
\newcommand{\hX}{\widehat{X}}
\newcommand{\pare}[1]{\left(#1\right)}
\newcommand{\bra}[1]{\left[#1\right]}
\newcommand{\dbra}[1]{[\kern-0.15em[ #1 ]\kern-0.15em]}
\newcommand{\dbraco}[1]{[\kern-0.15em[ #1 [\kern-0.15em[}
\newcommand{\K}{\mathfrak{K}}
\newcommand{\Nu}{\mathfrak{N}}
\newcommand{\bF}{\mathbf{F}}
\newcommand{\ubF}{\underline{\mathbf{F}}}
\newcommand{\obF}{\overline{\mathbf{F}}}
\newcommand{\oF}{\overline{\mathcal{F}}}
\newcommand{\data}{{(\bF, \, \prob, \, \K)}}
\newcommand{\datan}{{(\bF^n, \, \prob^n, \, \K^n)}}
\newcommand{\indic}{\mathbb{I}}
\newcommand{\kt}{\mathfrak{t}}
\newcommand{\absco}{{<\kern-0.53em<}}
\begin{document}

\title[The continuous behavior of the num\'eraire portfolio]{The continuous behavior of the num\'eraire portfolio under small changes in information structure, probabilistic views and investment constraints}%
\author{Constantinos Kardaras}%
\address{Constantinos Kardaras, Mathematics and Statistics Department, Boston University, 111 Cummington Street, Boston, MA 02215, USA.}%
\email{kardaras@bu.edu}%

\subjclass[2000]{60H99, 60G44, 91B28, 91B70}
\keywords{Information, investment constraints, log-utility maximization, mathematical finance,
num\'eraire portfolio, semimartingales, stability, well-posed problems.}%

\date{\today}%
\begin{abstract}
The \textsl{\num \ portfolio} in a financial market  is the unique positive wealth process that makes all other nonnegative wealth processes, when deflated by it, supermartingales. The \num \ portfolio depends on market characteristics, which include: \textbf{(a)} the information flow available to acting agents, given by a \textsl{filtration}; \textbf{(b)} the statistical evolution of the asset prices and, more generally, the states of nature, given by a \textsl{probability} measure; and \textbf{(c)} possible restrictions that acting agents might be facing on available investment strategies, modeled by a \textsl{constraints} set. In a financial market with continuous-path asset prices, we establish the stable behavior of the \num \ portfolio when each of the aforementioned market parameters is changed in an infinitesimal way.
\end{abstract}

\maketitle

\setcounter{section}{-1}

\section{Introduction}

Within the class of expected utility maximization problems in the theory of \emph{Financial Economics}, the one involving expected \emph{logarithmic} utility plays a central role. Its importance can be understood by going as back as \cite{KELLY}, where the optimal exponential growth for a gambler's wealth was discovered from an information-theoretic point of view. In general semimartingale models, it is the \emph{only} case of utility where an explicit solution can be given in terms of the triplet of predictable characteristics, as was carried out in \cite{MR1970286}.

The log-optimal portfolio, when it exists, is the \textsl{\num \ portfolio} (an appellation that was introduced in \cite{Long90}) according to the definition in \cite{MR1849424}: all other wealth processes, when discounted by the log-optimal one, become supermartingales under the historical (statistical, real world) probability. In fact, the \num \ portfolio can exist even in cases where the log-optimal problem does not have a unique solution, which happens when the value of the log-optimal problem is infinite.

The \num \ portfolio depends on the stochastic nature of the financial market. As the output of an optimization problem, it is of importance to ensure that it is stable under small changes in the market parameters. Here, focus is given on the following characteristics:
\begin{itemize}
  \item Available \emph{information} that economic agents have access to.
  \item \emph{Statistical} (or even \emph{subjective}) \emph{views} on the possible future outcomes.
  \item \emph{Investment constraints} (usually, institutionally-enforced) that agents face.
\end{itemize}
Institutionally-enforced constraints can involve, for example, prevention of short sales. Another important restriction that agents face is that of a finite credit limit; their wealth has to remain positive in order to avoid bankruptcy.

\smallskip

The purpose of this work is to guarantee that small deviations from the above market characteristics do not lead to radical changes in the structure of the \num \ portfolio. Naturally, part of the problem is to rigorously define what is meant by ``deviations'' of the market characteristics. In turn, this means that in order to achieve the desired continuous behavior of the \num \ portfolio, certain economically-reasonable topological structures have to be placed on  filtrations, probabilities and constraint sets.

\smallskip

Stability of the \num \ portfolio is a \emph{qualitative} study; there are, however, good \emph{quantitative} reasons to undertake such study. Lately, there has been significant interest in quantifying the \emph{value of insider information}, as measured via the increase in the log-utility of an insider with respect to a non-informed trader. One can check, for example, \cite{MR2223957} and the wealth of references therein. It then becomes plausible to examine \emph{marginal} values of insider information, or of investment freedom. The last question is intimately related to \emph{differentiability} of the \num \ portfolio (or at least of the value of the log-utility maximization problem) with respect to market parameters. Such differentiability would give a first-order approximation of the behavior of the \num \ portfolio. Before seeking conditions ensuring differentiability, which is a possible topic for further research, a zeroth-order study concerning continuity has to be carried out. In the present work, we only scratch the very surface of the problem of differentiability of the \num \ portfolio and the calculation of its derivative.

\smallskip

The structure of the paper as follows. Section \ref{sec: main} sets up the model with continuous asset-price processes, where markets are parameterized via a triplet of data, including information flows, statistical structure and investment constraints that agents face. A ``proximity'' concept for the market parameters is introduced by defining modes of convergence for the three data inputs. Theorem \ref{thm: main} is the result which establishes continuity of the \num \ portfolio in a rather strong sense under convergence of market's data. Then, Section \ref{sec: proof} is dedicated to proving Theorem \ref{thm: main}.

\smallskip

The workable expression that is obtained for the \num \ portfolios allows for a bare-hands approach to proving Theorem \ref{thm: main}. This should be contrasted with the treatment in \cite{LarZit06} and \cite{KarZit07}, where passage to the dual problem, as described in \cite{MR1722287}, is necessary. There, unnatural (from an economical point of view) uniform integrability conditions have to be assumed involving the class of equivalent martingale measures of the market.

The assumption of continuity of the asset-price processes is made for simplifying the presentation. (It should be noted however that an elementary example in \S \ref{subsec: jump counterexa} shows that the result of Theorem \ref{thm: main} is not valid without additional control on the agent's constraints.) Even by assuming continuity of the asset-price processes, one cannot completely avoid dealing with jumps in the proof of Theorem \ref{thm: main}. Changing from the probability measure of one market to the one of another, as has to be done, results in the appearance of martingale density processes with possible jump components. These technical complications make the proof of Theorem \ref{thm: main} somewhat lengthy.

\section{The Result on the Continuous Behavior of the Num\'eraire Portfolio} \label{sec: main}

\subsection{The set-up}

Every stochastic process in the sequel is defined on a \textsl{stochastic basis} $(\Omega, \, \oF, \, \obF, \, \oprob)$. Here, $\oprob$ is a probability on $(\Omega, \oF)$, where $\oF$ is a $\sigma$-algebra that will make all involved random variables measurable. Further, $\obF = \pare{\oF_t}_{t \in \Real_+}$ is a ``large'' filtration  that will dominate all other filtrations that will appear. Of course, $\oF_t \subseteq \oF$ for all $t \in \Real_+$ and $\obF$ is assumed to satisfy the \textsl{usual hypotheses} of right-continuity and saturation by $\oprob$-null sets.

\subsubsection{Assets and investing}

The price-processes of $d$ traded financial assets, where $d \in \Natural = \set{1, 2, \ldots}$, are denoted by $S^1, \ldots, S^d$. All processes $S^i$, $i=1, \ldots, d$, are $\obF$-adapted and are assumed to have been discounted by a ``baseline'' asset that will act as a deflator for the denomination of all wealth processes.

The minimal filtration that  makes $S$ adapted and satisfies the usual hypotheses will be denoted by $\ubF$. Since $S$ is $\obF$-adapted, $\ubF \, \subseteq \, \obF$. In the sequel, the information flow of economic agents acting in the market will be modeled via elements $\bF$ such that
\begin{equation} \label{SAND} \tag{INFO}
\bF \text{ is a filtration satisfying the usual hypotheses, and } \ubF \, \subseteq \, \bF \subseteq \obF.
\end{equation}
We shall also model statistical, or subjective, views of economic agents via $\prob$, where
\begin{equation} \label{LOC-EQUIV} \tag{P-LOC-EQUIV}
\prob \text{ is a probability, with }\prob \sim \oprob \text{  on } \oF_T \text{ holding for all } T \in \Real_+.
\end{equation}

The following innocuous assumption on the structure of the $S$ will be in force throughout:
\begin{equation} \label{SEMI-MART} \tag{CON-SEMI-MART}
S \text{ is a } (\obF, \oprob) \text{-semimartingale with } \oprob \text{-a.s. continuous paths}.
\end{equation}
For a pair $(\bF, \prob)$ satisfying \eqref{SAND} and \eqref{LOC-EQUIV}, \eqref{SEMI-MART} implies that $S$ is a $(\bF, \prob)$-semimartingale. Therefore, one can define the class of all possible \emph{nonnegative} wealth processes starting from (normalized) unit initial capital for a market in which the information-probability structure is given by $(\bF, \prob)$:
\begin{equation} \label{eq: wealth processes, all}
\X^{\bF} \, := \, \set{ X^{\vartheta} \equiv 1 + \int_0^\cdot \vartheta_t \ud S_t \ \Big| \ \vartheta \text{ is } \bF\text{-predictable and } S\text{-integrable, and } X^{\vartheta} \geq 0, \ \prob \text{-a.s.}}
\end{equation}
The dependence on $\prob$ from $\X^\bF$ in \eqref{eq: wealth processes, all} above is suppressed, simply because there is no dependence in view of \eqref{LOC-EQUIV}. The following structural assumption on the class of wealth processes will be in force throughout.
\begin{equation} \label{NUPBR} \tag{NUPBR}
\downarrow \lim_{m \to \infty} \sup_{X \in \X^{\obF}} \oprob \, [X_T > m] = 0, \text{ for all } T \in \Real_+.
\end{equation}
(Note that ``$\downarrow \lim$'' denotes a \emph{nonincreasing} limit.) In other words, the set $\{ X_T \such X \in \X^{\obF} \}$ is bounded in $\oprob$-probability for all $T \in \Real_+$. For a pair $(\bF, \prob)$ satisfying \eqref{SAND} and \eqref{LOC-EQUIV}, \eqref{NUPBR} implies that $\{X_T \such X \in \X^{\bF} \}$ is bounded in $\prob$-probability for all $T \in \Real_+$.

\begin{rem}

According to \cite{MR2335830}, condition \eqref{NUPBR}, an acronym for \textsl{No Unbounded Profit with Bounded Risk}, is equivalent to existence of the \num \ portfolio (see \S\ref{subsubsec: num portf} below) for any pair $(\bF, \, \prob)$ that satisfies \eqref{SAND} and \eqref{LOC-EQUIV}. Since this work is aimed at studying stability of the \num \ portfolio, \eqref{NUPBR} is a minimal structural assumption.
\end{rem}

\subsubsection{Constraints on investment}

Fix some pair $(\bF, \, \prob)$ corresponding to the information-probability structure of the financial market. Agents in this market might be facing constraints on possible investment strategies, which we now formally describe. Consider a set-valued process $\K : \Omega \times \Real_+ \mapsto \B(\Real^d)$, where $\B(\Real^d)$ denotes the class of Borel subsets of $\Real^d$. A process in $X^{\vartheta} \in \X^{\bF}$ will be called \textsl{$\K$-constrained} if $\vartheta_t(\omega) \in X_t^{\vartheta}(\omega) \K(\omega, t)$ for all $(\omega, t) \in \Omega \times \Real_+$; in short, $\vartheta \in X^{\vartheta} \K$. (Investment constraints of this kind, but where no dependence of the constraint sets on $(\omega, t) \in \Omega \times \Real_+$ is involved, appear in an It\^o-process modeling context in the literature in \cite{MR1189418}.) We denote by $\X^{(\bF, \K)}$ the class of all $\K$-constrained wealth processes in $\X^{\bF}$; namely, $\X^{(\bF, \K)}  \, := \, \set{ X^{\vartheta} \in \X^{\bF} \ \big| \ \vartheta \in X^{\vartheta} \K}$. For $X^{\vartheta} \in \X^{(\bF, \K)}$, both $\vartheta$ and $X^{\vartheta}$ are $\bF$-predictable. It makes sense, both from a mathematical and a financial point of view, to give the constraints set a predictable structure as well. A set-valued process $\K$ will be called \textsl{$\bF$-predictable} if $\{ (\omega, t) \such \K(\omega, t) \cap K \neq \emptyset \}$ is an $\bF$-predictable set for all \emph{compact} $K \subseteq \Real^d$. For more information on this kind of measurability, see Appendix 1 of \cite{MR2335830}, or Chapter 17 of \cite{MR1717083} for a more general treatment. Further, it is financially reasonable to put some closedness and convexity structure on $\K$. We call $\K$ \textsl{closed} and \textsl{convex} if $\K(\omega, t)$ has these properties for all $(\omega, t) \in \Omega \times \Real_+$.

\subsubsection{Financial market data}

Before the formal definition of the financial market's \emph{data} is given, we tackle degeneracies that might appear in the asset-price process. Call $G := \mathsf{trace} [S, S]$, where ``$\mathsf{trace}$'' denotes the trace operator on matrices and $[S, S]$ denotes the continuous, $(d \times d)$-matrix-valued quadratic covariation process of $S$. It is straightforward that $G$ is $\ubF$-predictable and nondecreasing. There exists a $(d \times d)$-nonnegative-definite-matrix-valued, $\ubF$-predictable process $c$ such that $[S,S] = \int_0^\cdot c_t \, \ud G_t$ (in obvious matrix notation). Define $\Nu := \set{ x \in \Real^d \such c x = 0 }$, where the dependence of $\Nu$ on $(\omega, t)$ is suppressed; $\Nu$ is $\ubF$-predictable and takes values in linear subspaces of $\Real^d$. We denote by $\Nu^\bot$ the orthogonal complement of $\Nu$; this is also a $\ubF$-predictable, $\Real^d$-subset-valued process. (The facts that $\Nu$ and $\Nu^\bot$ are $\ubF$-predictable follow by the results of Appendix 1 of \cite{MR2335830}.) Now, pick any $\bF$-predictable, $S$-integrable and $\Nu$-valued process $\vartheta$. The gains process $\int_0^\cdot \vartheta_t \ud S_t$ has null quadratic variation. Under \eqref{NUPBR}, $\int_0^\cdot \vartheta_t \ud S_t$ is identically equal to zero. Therefore, any agent should be free to invest in these $\Nu$-valued strategies, since they result in zero wealth. In other words, we should have, in compact notation, $\Nu \subseteq \K$.

We are ready to give the modeling structure of the financial market environment.

\begin{defn}
A triplet $\data$ will be called \textsl{financial market data}, if $\bF$ satisfies \eqref{SAND}, $\prob$ satisfies \eqref{LOC-EQUIV}, and $\K$ is an $\bF$-predictable, convex and closed $\Real^d$-set valued process such that $\Nu \subseteq \K$.
\end{defn}

\subsubsection{Num\'eraire portfolios} \label{subsubsec: num portf}

Under \eqref{SEMI-MART}, and for a pair $(\bF, \, \prob)$ that satisfies \eqref{SAND} and \eqref{LOC-EQUIV}, decompose $S = A^{(\bF, \prob)} + M^{(\bF, \prob)}$, where $A^{(\bF, \prob)}$ is an $\bF$-adapted, continuous process of locally finite variation, and $M^{(\bF, \prob)}$ is a $\bF$-local $\prob$-martingale. Assumption \eqref{NUPBR} implies that there exists an $\bF$-predictable process $a^{(\bF, \prob)}$ such that
\begin{equation} \label{NUPBR pred}
A^{(\bF, \prob)} = \int_0^\cdot \pare{ c_t a^{(\bF, \prob)}_t} \ud G_t, \text{ where } \int_0^T \inner{a^{(\bF, \prob)}_t}{c_t a^{(\bF, \prob)}_t} \ud G_t < \infty \text{ for all } T \in \Real_+.
\end{equation}
(This last fact was already present in \cite{MR1384360}, although not stated this way. The previous \textsl{structural conditions} \eqref{NUPBR pred} have also appeard in \cite{MR1183992} and \cite{MR1353193}.)

In the financial market with data $\data$, the \textsl{\num \ portfolio} is the \emph{unique} wealth process $\hX^\data \in \X^{(\bF, \, \K)}$ with the property that $X / \hX^\data$ is $(\bF, \prob)$-supermartingale for all $X \in \X^{(\bF, \, \K)}$. (For a complete list of the properties of the \num \ portfolio in connection to what is described here, one could check \cite{MR2335830}.) It can be shown  that the \num \ portfolio is the one that maximizes the growth of the wealth process, where the $(\bF, \, \prob)$-\textsl{growth} of a wealth process $X \in \X^{(\bF, \, \K)}$ with $\prob[X_t > 0, \text{ for all } t \in \Real_+] = 1$ is defined to be the finite variation part of $\log(X)$ in its $(\bF, \, \prob)$-semimartingale decomposition.

We shall now give a more concrete description of the \num \ portfolio. Start with some $X \in \X^{(\bF, \, \K)}$ such that $\prob[X_t > 0, \text{ for all } t \in \Real_+] = 1$, and consider the $\bF$-predictable, $d$-dimensional process $\pi$ defined implicitly via $\ud X_t = X_t \innerw{\pi_t}{\ud S_t}$. Using It\^o's formula and \eqref{NUPBR pred}, the $(\bF, \, \prob)$-growth of $X$ is easily seen to be equal to $\int_0^\cdot \big( \langle \pi_t, c_t a_t^{(\bF, \, \prob)}  \rangle -  \inner{\pi_t}{c_t \pi_t} / 2 \big) \ud G_t$.
As discussed previously, if $X^\data$ is to be the \num \ portfolio, it must have maximal growth. Therefore, let $\varphi^{(\bF, \, \prob, \, \K)}$ be the \emph{unique} $\bF$-predictable, $d$-dimensional, $(\K \cap \Nu^\bot)$-valued process that satisfies
\begin{equation} \label{eq: phins}
\varphi^\data (\omega, t) \, := \, \arg \max_{f \in \K \cap \Nu^\bot (\omega, t)} \pare{ \inner{f}{ c (\omega, t) a^{(\bF, \, \prob)} (\omega, t)} - \frac{1}{2} \inner{f}{c (\omega, t) f}},
\end{equation}
 for all $(\omega, t) \in \Omega \times \Real_+$. (If $\K = \Real^d$, $\varphi^{(\bF, \, \prob, \, \Real^d)} = a^{(\bF, \, \prob)}$.) The process $\varphi^{(\bF, \, \prob, \, \K)}$ is well-defined; this follows from the fact that the maximization problem \eqref{eq: phins} defining $\varphi^{(\bF, \, \prob, \, \Real^d)}$ is strictly concave and coercive on the closed convex set $\K \cap \Nu^\bot$. Its $\bF$-predictability follows from the corresponding property of the inputs $\K \cap \Nu^\bot$, $c$, $a^{(\bF, \, \prob)}$; again, we send the interested reader to Appendix 1 of \cite{MR2335830}. It then follows that \textsl{the $\data$-\num \ portfolio} $\hX^\data$ satisfies $\hX^\data_0 = 1$ and the dynamics $\ud \hX^\data_t = \hX^\data_t \big \langle \varphi_t^\data,  \ud S_t \big \rangle$ for $t \in \Real_+$. In other words, in logarithmic terms,
\begin{equation} \label{eq: numeraire}
\log \hX^{(\bF, \, \prob, \, \K)} \, := \, - \frac{1}{2} \int_0^\cdot \inner{\varphi^\data_t}{ c_t \varphi^\data_t} \ud G_t + \int_0^\cdot \varphi^\data_t \, \ud S_t.
\end{equation}
Indeed, it is straightforward to check that $\hX^{(\bF, \, \prob, \, \K)}$ as defined above is such that $X / \hX^{(\bF, \, \prob, \, \K)}$ is a $(\bF, \prob)$-supermartingale for all $X \in \X^{(\bF, \, \K)}$.

\subsection{Convergence assumptions}

In order to formulate the question of continuous behavior of the \num \ portfolio, several markets will be considered. For each $n \in \Natura$, the market structure will be modeled via the data $\datan$. The limiting behavior of the data triplets will be given in the paragraphs that follow. What is sought after is convergence, as $n \to \infty$, of the $n^{\text{th}}$ market's \num \ portfolio to the \num \ portfolio of the market corresponding to $n = \infty$.

First, convergence of filtrations is settled. Let us give some intuition. Assume, for simplicity, that all markets work under that same probabilistic structure, given by $\oprob$. For any $A \in \G$, an agent with information $\bF^n$ can only \emph{project} at each time $t \in \Real_+$ the the conditional probability $\oprob [A \such \F^n_t]$ that $A$ will happen or not. A natural way to define convergence of $(\bF^n)_{n \in \Natural}$ then would be to require that $\oprob [A \such \F^n_t]$ converges in $\oprob$-probability to $\oprob [A \such \F_t]$, at least pointwise for all $t \in \Real_+$. We ask something somewhat weaker.
\begin{equation} \label{F-CONV} \tag{F-CONV}
\plimn \int_0^T \left| \oprob[A \such \F^n_t] - \oprob[A \such \F^\infty_t] \right| \ud G_t = 0, \text{ for all } A \in \oF \text{ and } T \in \Real_+.
\end{equation}
Note that \eqref{F-CONV} certainly holds in the case where $(\bF^n)_{n \in \Natural}$ converges monotonically to $\bF^\infty$, in the sense that $\uparrow \limn \F^n_T = \F^\infty_T$ or $\downarrow \limn \F^n_T = \F^\infty_T$ for all $T \in \Real_+$, in view of the martingale convergence theorems.

The assumption on convergence of $(\prob^n)_{n \in \Natural}$ to $\prob^\infty$ is:
\begin{equation} \label{P-CONV} \tag{P-CONV}
\plimn \pare{ \frac{\ud \prob^n}{\ud \oprob} \bigg|_{\oF_T} } = \frac{\ud \prob^\infty}{\ud \oprob}\bigg|_{\oF_T}, \text{ for all } T \in \Real_+.
\end{equation}
Note that, as a consequence of Scheffe's lemma, \eqref{P-CONV} is equivalent to saying that $(\prob^n)_{n \in \Natural}$ converges in total variation to $\prob^\infty$ on $\oF_T$ for all $T \in \Real_+$.

We turn to the constraints sets. For two subsets $K \subseteq \Real^d$ and $K' \subseteq \Real^d$ define their \textsl{Hausdorff distance}
\begin{equation} \label{eq: dist}
\dist (K, K') := \max \left\{ \sup_{x \in K} \inf_{x' \in K'} |x - x'|, \, \sup_{x' \in K'} \inf_{x \in K} |x - x'| \right\}.
\end{equation}
For $m \in \Real_+$, let $B(m) := \{ x \in \Real^d \such |x| \leq m \}$. For a collection $(K^n)_{n \in \Natura}$ of subsets of $\Real^d$, define
\[
\climn K^n = K^\infty \text{ if and only if } \limn \dist \left( K^n \cap B(m), K^\infty \cap B(m) \right) = 0, \text{ for all } m \in \Real_+.
\]
Note that this convergence is \emph{weaker} than requiring $\limn \dist(K^n, K^\infty) = 0$ and that it is equivalent to saying that $K^\infty$ is the \textsl{closed limit} of the sequence $(K^n)_{n \in \Natural}$ (see Definition 3.66, page 109 of \cite{MR1717083}). We then ask that
\begin{equation} \label{C-CONV} \tag{C-CONV}
\climn \K^n(\omega, t) = \K^\infty(\omega, t), \text{ for all } (\omega, t) \in \Omega \times \Real_+.
\end{equation}

\subsection{Stability of the \num \ portfolio}

Continuity of the log-wealth of the \num \ portfolios will be obtained with respect to a strong convergence notion, which is now defined. Consider a collection $(\xi^n)_{n \in \Natura}$, each element being a continuous $(\obF, \, \oprob)$-semimartingale. For $n \in \Natura$, write $\xi^n = \overline{B}^n + \overline{L}^n$, where $\overline{B}^n$ is $\obF$-adapted, continuous and of finite variation and $\overline{L}^n$ is a $\obF$-local $\oprob$-martingale. We say that $(\xi^n)_{n \in \Natural}$ $\oS$-converges to $\xi^\infty$ and write $\oSn \xi^n = \xi^\infty$ if and only if $\plimn \int_0^T |\ud (\overline{B}_t^n - \overline{B}_t^\infty)| = 0$ as well as $\plimn [\overline{L}^n - \overline{L}^\infty, \, \overline{L}^n - \overline{L}^\infty]_T = 0$ hold for all $T \in \Real_+$. By the treatment in \cite{MR568256}, it can be shown that $\oS$-convergence is equivalent to (local, in time) convergence in the \textsl{semimartingale topology} on $(\Omega, \, \oF, \, \obF, \, \oprob)$ that was introduced in \cite{MR544800}. In particular, $\oS$-convergence is stronger than the \textsl{uniform convergence on compacts in probability}: $\oSn \xi^n = \xi^\infty$ implies $\ucpn \xi^n = \xi^\infty$, the last equality meaning $\plimn \sup_{t \in [0, T]} |\xi^n_t - \xi^\infty_t| = 0$, for all $T \in \Real_+$.

\begin{thm} \label{thm: main}
Consider a collection of markets, each with data $\datan$, indexed by $n \in \Natura$. Assume that all \eqref{SEMI-MART}, \eqref{NUPBR}, \eqref{F-CONV}, \eqref{P-CONV} and \eqref{C-CONV} are valid. For $n \in \Natura$, let $\hX^n := \hX^\datan$, be the $n^{\text{th}}$ \num \ portfolio. Then,
\[
\oSn \log \hX^n = \log \hX^\infty.
\]
\end{thm}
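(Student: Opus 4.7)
The plan is to exploit the explicit formula \eqref{eq: numeraire} for $\log \hX^n$ and reduce the $\oS$-convergence statement to pointwise convergence of the maximizers $\varphi^n := \varphi^{\datan}$ in a suitable integral sense.

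\textbf{Step 1 (common decomposition).} Since $\bF^n \subseteq \obF$ for every $n$, each $\varphi^n$ is $\obF$-predictable. Writing $S = \overline{A} + \overline{M}$ in its $(\obF, \oprob)$-decomposition with $\overline{A} = \int_0^\cdot c_t \overline{a}_t \ud G_t$, formula \eqref{eq: numeraire} gives the $(\obF, \oprob)$-decomposition
\[
\log \hX^n = \int_0^\cdot \left( \inner{\varphi_t^n}{c_t \overline{a}_t} - \tfrac{1}{2} \inner{\varphi_t^n}{c_t \varphi_t^n} \right) \ud G_t \; + \; \int_0^\cdot \varphi_t^n \, \ud \overline{M}_t.
\]
By the very definition of $\oS$-convergence, it therefore suffices to prove that, for every $T \in \Real_+$,
\[
\plimn \int_0^T \inner{\varphi_t^n - \varphi_t^\infty}{c_t(\varphi_t^n - \varphi_t^\infty)} \ud G_t = 0,
\]
because this controls both the bracket of the martingale parts and (via Cauchy--Schwarz against $\overline{a}$ and against $\varphi^n + \varphi^\infty$) the total variation of the difference of finite-variation parts.

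\textbf{Step 2 (pointwise stability of the argmax).} Each $\varphi^n(\omega,t)$ is, up to the $c$-kernel, the $c$-orthogonal projection of $a^n := a^{(\bF^n, \prob^n)}$ onto $\K^n \cap \Nu^\bot$, i.e.\ the unique maximizer of a strictly concave coercive quadratic. By a standard continuity-of-argmax argument for such problems, \eqref{C-CONV} together with $(\omega, t)$-wise convergence $c_t a_t^n \to c_t a_t^\infty$ will force $\varphi^n(\omega,t) \to \varphi^\infty(\omega,t)$ in the $c_t(\omega)$-seminorm, along any subsequence along which the required convergences hold.

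\textbf{Step 3 (the main obstacle: stability of the drift).} The hard part is to show $\oprob$-a.s.\ $G$-a.e.\ convergence (along subsequences) $c_t a_t^n \to c_t a_t^\infty$, which requires absorbing simultaneously the change of filtration $\bF^n \to \bF^\infty$ and the change of probability $\prob^n \to \prob^\infty$. I would proceed in two passes. First, fix the filtration at $\obF$ and use Girsanov: if $Z^n := (\ud \prob^n / \ud \oprob)|_{\oF_\cdot}$, then
\[
c a^{(\obF, \prob^n)} \ud G = c \overline{a} \ud G + \frac{1}{Z^n_-}\,\ud \bra{S, Z^n},
\]
and (P-CONV) together with the semimartingale convergence $Z^n \to Z^\infty$ (which is the delicate technical point, because the $Z^n$ may carry jumps even though $S$ is continuous) yields convergence of $c a^{(\obF, \prob^n)}$ to $c a^{(\obF, \prob^\infty)}$ in the appropriate sense on integrals against $\ud G$. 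Second, to get from $(\obF, \prob^n)$ down to $(\bF^n, \prob^n)$ one identifies $c a^n \ud G$ with the $\bF^n$-predictable dual projection of $c a^{(\obF, \prob^n)} \ud G$; hypothesis \eqref{F-CONV}, which encodes convergence of conditional probabilities integrated against $\ud G$, is exactly tailored to provide convergence of these dual projections.

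\textbf{Step 4 (uniform control and upgrade to $L^0(\ud G \otimes \oprob)$).} Pointwise convergence in the $c$-seminorm must be upgraded to integral convergence as in Step 1. For this I would use the self-bound coming from optimality: the maximizer satisfies $\inner{\varphi^n}{c \varphi^n} \leq 2 \inner{\varphi^n}{c a^n}$, so the finite-variation part of $\log \hX^n$, which is $\oprob$-a.s.\ bounded above (since $\hX^n$ stays positive and $\log \hX^n$ differs from a local martingale by this drift), provides a uniform $(n, \omega)$-integrability envelope for $\inner{\varphi^n}{c \varphi^n} \ud G$. Combined with the pointwise convergence of Step 2, this yields the $\oprob$-probability convergence of the integral in Step 1 via a Vitali-type argument, completing the proof. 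The principal difficulty, as signalled in the introduction, really is Step 3: juggling jumps in $Z^n$ against the continuous integrator $S$ under a simultaneously varying filtration.
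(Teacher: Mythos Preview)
Your overall architecture---explicit formula, convergence of the maximizers $\varphi^n$, then domination---matches the paper's. The gap is in how you decouple the two perturbations in Step~3, and in the envelope of Step~4.

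\textbf{Step 3.} You propose to change probability first on $\obF$ and only then project down to $\bF^n$. But \eqref{F-CONV} is stated for $\oprob$-conditional probabilities of a \emph{fixed} set $A$; it does not directly control $\bF^n$-predictable projections taken under the varying measures $\prob^n$, nor projections of the $n$-dependent integrand $c\,a^{(\obF,\prob^n)}$. The paper reverses the order: it introduces the intermediate portfolio $\tX^n := \hX^{(\bF^n,\oprob,\K^n)}$ and first proves $\oSn \log\tX^n = \log\tX^\infty$ by observing that $c\,\ta^n := c\,a^{(\bF^n,\oprob)}$ is the $(\bF^n,\oprob)$-predictable projection of the \emph{fixed} process $c\,\oa$, so that \eqref{F-CONV} applies via a predictable-projection lemma. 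Only afterwards does it handle $\log(\hX^n/\tX^n)$ by Girsanov on $\bF^n$: writing $Z^n = (\ud\prob^n/\ud\oprob)|_{\F^n_\cdot}$ and $\zeta^n := a^n - \ta^n$, it proves a separate positive-martingale convergence theorem giving $\plimn \int_0^T |\zeta^n_t|_{c_t}^2\,\ud G_t = 0$, and then uses the Lipschitz bound $|\varphi^n - \tphi^n|_c \le |\zeta^n|_c$ (from the deterministic analysis of the quadratic maximization) to conclude. Your ordering would require either extending \eqref{F-CONV} to $\prob^n$-projections or projecting a moving target; neither is free.

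\textbf{Step 4.} The argument ``$\hX^n$ stays positive, hence the drift of $\log\hX^n$ is bounded above'' is not valid: positivity bounds $\log\hX^n$ from below, but the local-martingale part is unbounded, so no conclusion on the drift follows from this. (It \emph{is} true that the drift under $(\obF,\oprob)$ is dominated by $\og := \tfrac12 |\oa|_c^2$, simply by completing the square, and this is what the paper uses.) Even granting that, an upper bound on $\inner{\varphi^n}{c\oa} - \tfrac12|\varphi^n|_c^2$ does not dominate $|\varphi^n|_c^2$. The paper avoids needing a uniform envelope for $|\varphi^n|_c^2$ altogether: for the filtration/constraints piece it dominates $|\tphi^n - \tphi^\infty|_c$ pointwise (via $|\tphi^n|_c \le |\ta^n|_c$ and the established convergence) and applies dominated convergence against $\og$, while for the probability piece it bounds $\int_0^T |\varphi^n - \tphi^n|_c^2\,\ud G \le \int_0^T |\zeta^n|_c^2\,\ud G \to 0$ directly, with no domination needed.
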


The proof of Theorem \ref{thm: main} is given in Section \ref{sec: proof}. It is easy to argue why Theorem \ref{thm: main} is true, and this somewhat sets the plan for the proof. For notational simplicity, let $a^n := a^\datan$ and $\varphi^n := \varphi^\datan$ for all $n \in \Natura$. Under \eqref{P-CONV} and \eqref{F-CONV} one would expect that $(a^n)_{n \in \Natural}$ converges in some sense to $a^\infty$. Then, \eqref{C-CONV} and \eqref{eq: phins} should imply that $(\varphi^n)_{n \in \Natural}$ converges (again, in some sense) to $\varphi^\infty$. After that, \eqref{eq: numeraire} makes it very plausible that $(\log \hX^n)_{n \in \Natural}$ should converge to $\log \hX^\infty$. Of course, one has to give precise meaning to these ``senses'' of convergence of the predictable processes. The details of the proof are technical, but more or less follow the above intuitive steps.

\begin{rem}
The result of Theorem \ref{thm: main}, given all its notation and assumptions, implies
\begin{equation} \label{eq: rel error inf}
\limn \prob^\infty \bra{ \sup_{t \in [0, T]} \left| \frac{\hX_t^n - \hX_t^\infty}{\hX_t^\infty} \right| > \epsilon} = 0, \ \text{ for all } T \in \Real_+ \text{ and } \epsilon > 0,
\end{equation}
as well as
\begin{equation} \label{eq: rel error n}
\limn \prob^n \bra{ \sup_{t \in [0, T]} \left| \frac{\hX_t^\infty - \hX_t^n}{\hX_t^n} \right| > \epsilon} = 0, \ \text{ for all } T \in \Real_+ \text{ and } \epsilon > 0.
\end{equation}
Both of the above limiting relationships are incarnations of the fact that \emph{small deviations from information, probability and investment constraints structures will lead to a small relative change in the \num \ portfolio}. While \eqref{eq: rel error inf} is from the point of view of the limiting market, \eqref{eq: rel error n} takes the viewpoint of the approximating markets.
\end{rem}

\subsection{The case of asset-prices with jumps} \label{subsec: jump counterexa}

Theorem \ref{thm: main} need not hold in the case where jumps are present in the asset-price process. A simple discrete one-time-period counterexample is given below; after that, a discussion follows on what the issue is, along with a possible resolution.

\begin{exa} \label{exa: counterexa with jumps}
Consider a one-time-period discrete stochastic basis $(\Omega, \, \oF, \, \obF, \, \oprob)$, where $\obF = (\oF_0, \, \oF_1)$. Suppose that $(\Omega, \, \oF, \,  \oprob)$ is rich enough to accommodate a sequence $(\varepsilon_n)_{n \in \Natural}$ of independent standard normal random variables, as well as some random variable $\eta$, independent of the previous Gaussian sequence with $\oprob[\eta = 1] = p = 1 - \oprob[\eta = -1]$, where $0 < p < 1$. Define a collection $(\bF^n)_{n \in \Natura}$ of filtrations via $\F^n_1 = \oF_1 := \sigma \pare{ (\varepsilon_j)_{j \in \Natural}, \, \eta }$,
 for all $n \in \Natura$ (the information at the terminal date is the same in all markets), as well as
\[
\F^n_0 \, := \, \sigma (\varepsilon_1, \ldots, \varepsilon_n) \textrm{ for all }
n \in \Natural, \textrm{ and } \F^\infty_0 = \oF_0 = \, := \, \sigma \big(
(\varepsilon_j)_{j \in \Natural} \big) = \bigvee_{n \in \Natural}
\F^n_0.
\]
Of course, $(\bF^n)_{n \in \Natural}$ converges monotonically upwards to $\bF^\infty$.

The financial market has one risky asset: $d=1$. With $\varepsilon := \sum_{j =1}^{\infty} 2^{-j} \varepsilon_j$, set $S_0 = 0$ and $S_1 = \varepsilon \eta$. The classical \textsl{No Arbitrage} condition holds for the market with information $\obF = \bF^\infty$, which is the equivalent of \eqref{NUPBR} for discrete-time models. The probabilistic structure is the same in all the markets, given by $\oprob$. Further, no institutionally-enforced constraints are present for agents acting in the market indexed by any $n \in \Natura$.

For the limiting market with $\bF^\infty$-information, the model is just a (conditional) binomial one, since $\varepsilon$ is $\F^\infty_0$-measurable. We have $X^{1, \vartheta}_1 := 1+ \vartheta_0 (S_1 - S_0) = 1 + \vartheta_0
\varepsilon \eta$. Since $\vartheta_0 \in \F_0^\infty \supseteq \sigma(\varepsilon)$, it is easy to see (optimizing the expected log-utility) that the limiting market's \num \ portfolio is such that $\hX^\infty_1 := 1 + (2 p - 1) \eta$. If $p \neq 1/2$, $\prob[\hX^\infty_1 = 1] = 0$.

Consider now the market with information $\bF^n$ for some $n \in \Natural$. Conditional on $\F_0^n$, $\varepsilon$ is independent of $\eta$ and its law is Gaussian with mean $\sum_{j=1}^n 2^{-j} \varepsilon_j$ and variance $1/2^{n+1}$. Since the conditional law of $S_1 - S_0$ is supported on the whole real line, we get $\X^{\bF^n}(1) = \{ 1 \}$. Therefore, for each $n \in \Natural$, every approximating market's \num \ portfolio satisfies $\hX^n_1 := 1$. This obviously does not converge to $\hX^\infty_1$, if $p \neq 1 / 2$.
\end{exa}

In the previous example, all the \emph{nonnegative} wealth process sets $\X^{\bF^n}(1)$ are trivial, but the limiting $\X^{\bF^\infty}(1)$ is non-trivial. Even though there are no institutionally-enforced constraints in the markets, agents \emph{still} have to face the \textsl{natural constraints} $\K_+^n$, $n \in \Natural$, that ensure the positivity of the wealth process. As it turns out, $\K_+^n = \{ 0 \}$ for all $n \in \Natural$, while $\K_+^\infty = \bra{-1 / |\varepsilon|, \, 1 / |\varepsilon|}$. Such behavior is of course absent in the case of continuous-path price processes.

A possible resolution to the previous problem could be the following. In a general discrete-time model, if $\underline{\K}_+$ denotes the natural positivity constraints of the market with information $\ubF$, then $\underline{\K}_+ (\omega, t) \subseteq \K_+^n (\omega, t)$ holds for all $(\omega, t) \in \Omega \times \Real_+$ and in $n \in \Natura$ in view of \eqref{SAND}. If one \emph{forces} from the beginning the additional assumption $\K^n (\omega, t) \subseteq \underline{\K}_+ (\omega, t)$ for all $n \in \Natura$, the problem encountered at Example \ref{exa: counterexa with jumps} ceases to exist, and one should be able to proceed as before.

\subsection{First-order analysis}

Once continuity of the \num \ portfolio is established, the next natural step is to study the \emph{direction} of change given specific changes of the inputs. We provide here a first insight on how the \num \ portfolio changes when we alter \emph{only} the probabilistic structure of the problem, keeping the information fixed and working on the non-constrained case. In more general situations the problem is expected to be rather involved.

For the purposes of this subsection, we shall change the notation slightly. We simply use $\bF = (\F_t)_{t \in \Real_+}$, instead of $\obF$, to denote the common filtration of all agents. Let $\prob^0$ be the ``limiting'' probability (the one that we previously denoted by $\prob^\infty$). Furthermore, let $\prob^1$ be some probability that is equivalent to $\prob^0$, and let $\prob^\epsilon := (1 - \epsilon) \prob^0 + \epsilon \prob^1$. Write $S = A^\epsilon + M^\epsilon$ for the Doob-Meyer decomposition of $S$ under $(\bF, \prob^\epsilon)$; here, $A^\epsilon = \int_0^\cdot c_t a^\epsilon_t \ud G_t$ and $[M^\epsilon, M^\epsilon] = [M^0, M^0] = \int_0^\cdot c_t \ud G_t$ for all $\epsilon \in [0,1]$.

Define $Z^1 := (\ud \prob^1 / \ud \prob^0)|_{\F_\cdot}$; since $Z^1$ is a strictly positive $(\bF, \prob^0)$-martingale, one can write
\[
Z^1 = \exp \pare{\int_0^\cdot \lambda^1_t \ud M_t^0 - \frac{1}{2} \int_0^\cdot \inner{\lambda^1_t}{c_t \lambda^1_t} \ud G_t} N^1_t,
\]
where $N^1$ is a local $(\bF, \prob^0)$-martingale that is strongly orthogonal to $M^0$. (This multiplicative decomposition of $Z^1$ follows in a straightforward way from its corresponding additive decomposition --- see Theorem III.4.11 of \cite{MR1943877}.) It follows that $Z^\epsilon := (\ud \prob^\epsilon / \ud \prob^0)|_{\F_\cdot} = (1- \epsilon) + \epsilon Z^1$ satisfies
\[
Z^\epsilon = \exp \pare{\epsilon \int_0^\cdot \lambda^\epsilon_t \ud M_t^0 - \frac{|\epsilon|^2}{2} \int_0^\cdot \inner{\lambda^\epsilon_t}{c_t \lambda^\epsilon_t} \ud G_t} N^\epsilon_t,
\]
where $\lambda^\epsilon := (Z^1 / Z^\epsilon) \lambda^1$, and $N^\epsilon$ is a local $(\bF, \prob^0)$-martingale that is strongly orthogonal to $M^0$. With the above notation, and according to Girsanov's theorem, we have $a^\epsilon = a^0 + \epsilon \lambda^\epsilon$. 

Let $\hX^\epsilon$ denote the \num \ portfolio under market data $(\bF, \prob^\epsilon, \Real^d)$. Since there are no constraints on investment, $\hX^\epsilon$ satisfies $\hX^\epsilon_0 = 1$ and $\ud \hX^\epsilon_t = \hX^\epsilon_t \innerw{a_t^\epsilon}{\ud S_t}$ for $t \in \Real_+$; in other words, and using the Doob-Meyer decomposition of $S$ under $\prob^0$, we have
\[
\log \hX^\epsilon = \int_0^\cdot \pare{\inner{a^\epsilon_t}{ c_t a^0_t} - \frac{1}{2} \inner{a^\epsilon_t}{c_t a^\epsilon_t} }\ud G_t + \int_0^\cdot a^\epsilon_t \, \ud M^0_t.
\]
From $a^\epsilon = a^0 + \epsilon \lambda^\epsilon$ we get $\inner{a^\epsilon}{ c a^0} - (1/2) \inner{a^\epsilon}{c a^\epsilon} = \inner{a^0}{ c a^0} - (1/2) \inner{a^0}{c a^0} - (|\epsilon|^2 / 2) \inner{\lambda^\epsilon}{c \lambda^\epsilon}$; therefore,
\begin{equation} \label{eq: log-eps-der}
\frac{1}{\epsilon} \log \pare{\frac{\hX^\epsilon}{\hX^0}} = - \frac{\epsilon}{2} \int_0^\cdot \inner{\lambda_t^\epsilon}{c_t \lambda_t^\epsilon} \ud G_t + \int_0^\cdot \lambda^\epsilon_t \, \ud M^0_t.
\end{equation}
Given the above equality, and using the fact that $\ucpeps \lambda^\epsilon =  Z^1 \lambda^1 = \lambda^0$, it is straightforward that
\[
\oSeps \pare{\frac{1}{\epsilon} \log \pare{\frac{\hX^\epsilon}{\hX^0}}} = \int_0^\cdot \lambda^0_t \, \ud M^0_t.
\]

In a similar manner, one can proceed to higher-order $\epsilon$-derivatives of $\log(\hX^\epsilon)$ at $\epsilon = 0$. For example, \eqref{eq: log-eps-der} and simple algebra (remembering that $\lambda^\epsilon = \lambda^1 Z^1 / Z^\epsilon = \lambda^0 / Z^\epsilon$) gives
\begin{eqnarray*}
\frac{1}{\epsilon} \pare{\frac{1}{\epsilon} \log \pare{\frac{\hX^\epsilon}{\hX^0}} - \int_0^\cdot \lambda^0_t \, \ud M^0_t} &=& - \frac{1}{2} \int_0^\cdot \inner{\lambda_t^\epsilon}{c_t \lambda_t^\epsilon} \ud G_t + \frac{1}{\epsilon} \int_0^\cdot \pare{\lambda^\epsilon_t - \lambda^0_t }\, \ud M^0_t. \\
&=& - \frac{1}{2} \int_0^\cdot \inner{\lambda_t^\epsilon}{c_t \lambda_t^\epsilon} \ud G_t - \int_0^\cdot \lambda^\epsilon_t  \pare{Z_t^1 - 1} \, \ud M^0_t,
\end{eqnarray*}
after which it is straightforward that
\[
\oSeps \pare{\frac{1}{\epsilon} \pare{\frac{1}{\epsilon} \log \pare{\frac{\hX^\epsilon}{\hX^0}} - \int_0^\cdot \lambda^0_t \, \ud M^0_t}} = - \frac{1}{2} \int_0^\cdot \inner{\lambda_t^0}{c_t \lambda_t^0} \ud G_t - \int_0^\cdot \lambda^0_t (Z^1_t - 1) \, \ud M^0_t.
\]

\section{Proof of Theorem \ref{thm: main}} \label{sec: proof}

Throughout the proof, \emph{all} the assumptions of Theorem \eqref{thm: main} are in force. Without loss of generality, and for notational convenience, it is assumed that $\oprob = \prob^\infty$. Then, with $Z^n := (\ud \prob^n / \ud \oprob)|_{\obF}$ for all $n \in \Natura$, $\plimn Z_T^n = 1 = Z^\infty_T$ holds for all $T \in \Real_+$.

\subsection{Setting out the plan} \label{subsec: splitting}

The first step towards proving Theorem \ref{thm: main} will involve the \emph{fixed-probability} case, where $\prob^n = \oprob$ for all $n \in \Natura$. Then, the general case where \eqref{P-CONV} is assumed will be dealt with.

In order to lighten notation, we set
\begin{equation} \label{eq: hats}
a^n := a^{(\bF^n, \, \prob^n)}, \ \varphi^n := \varphi^\datan \text{ and } \hX^n := \hX^\datan, \text{ for all } n \in \Natura.
\end{equation}
For the fixed probability case, we also consider
\begin{equation} \label{eq: tildes}
\ta^n := a^{(\bF^n, \, \oprob)}, \ \tphi^n := \varphi^{(\bF^n, \, \oprob, \, \K^n)} \text{ and } \tX^n := \hX^{(\bF^n, \, \oprob, \, \K^n)}, \text{ for all } n \in \Natura.
\end{equation}
Since $\oprob = \prob^\infty$, we have $\tX^\infty = \hX^\infty$. For each $n \in \Natura$, $\tX^n$ is the \num \ portfolio for an agent with data $(\bF^n, \oprob, \K^n)$. In order to prove Theorem \ref{thm: main}, first we shall show that
\begin{equation} \label{eq: main 1}
\oSn \log \tX^n = \log \tX^\infty,
\end{equation}
and then that
\begin{equation} \label{eq: main 2}
\oSn \log (\hX^n / \tX^n) = 0.
\end{equation}

\subsection{A deterministic concave maximization problem} \label{subsec: determ conc maxim prob}
For fixed $n \in \Natura$ and $(\omega, t) \in \Omega \times \Real_+$, all $\varphi^n(\omega, t)$ defined in \eqref{eq: hats} and $\tphi^n(\omega, t)$ defined in \eqref{eq: tildes} appear as solutions to a deterministic concave maximization problem of the form
\begin{equation} \label{eq: p's}
\phi (c, \, \alpha, \, \K) \, := \, \arg \max_{f \in \K \cap \Nu^\bot} \pare{ \inner{f}{\alpha}_c - \frac{1}{2} |f|_c^2 },
\end{equation}
for some $\alpha \in \Real^d$, where the \textsl{pseudo-inner-product} $\inner{\cdot}{\cdot}_c$ on $\Real^d$ is defined via $\inner{x}{y}_c := \inner{x}{c y}$ (remember that $\inner{\cdot}{\cdot}$ is the \emph{usual} Euclidean inner-product) for all vectors $x$ and $y$ of $\Real^d$, where $c$ is a $(d \times d)$-nonnegative-definite matrix. Of course, $|\cdot|_c$ denotes the \textsl{pseudo-norm} generated by the last pseudo-inner-product $\inner{\cdot}{\cdot}_c$. In \eqref{eq: p's}, $\Nu := \{ x \in \Real^d \ | \ |x|_c = 0 \}$ and we suppose that $\Nu \subseteq \K$, so there is a unique solution to \eqref{eq: p's}, and $\phi (c, \, \alpha, \, \K)$ is well-defined.

It makes sense then to study the deterministic problem \eqref{eq: p's}. \emph{Only} for this subsection, all elements involved, including $c$ and $\K$ will be assumed deterministic. For the $(d \times d)$-nonnegative-definite matrix we shall be assuming that $\mathsf{trace}(c) = 1$, which implies in particular that $|x|_c \leq |x|$ for all $x \in \Real^d$, where $|\cdot|$ is the Euclidean norm. Observe that the $\ubF$-predictable process $(c_t)_{t \in \Real_+}$ satisfies $\mathsf{trace} (c) = 1$ since, formally, $\ud G_t = \ud \pare{\mathsf{trace} [S, S]_t} =  \mathsf{trace} \pare{ \ud [S, S]_t} = \mathsf{trace} \pare{ c_t \ud G_t} = \mathsf{trace} \pare{ c_t } \ud G_t$.

The dependence of $\phi (c, \, \alpha, \, \K)$ of \eqref{eq: p's} on $\alpha$ and $\K$ will be now examined. Remember that $B(m) := \{ x \in \Real^d \ | \ |x| \leq m \}$ for $m \in \Real_+$, as well as the definition of ``$\dist$'' from \eqref{eq: dist}.

\begin{prop} \label{prop: determin conc maxim}
Let $\alpha$, $\alpha'$ be vectors in $\Real^d$ and $\K$,  $\K'$ be closed and convex subsets of $\Real^d$ with $\Nu \subseteq \K$ and $\Nu \subseteq \K'$. With the notation of problem \eqref{eq: p's}, we have
\begin{enumerate}
  \item $|\phi (c, \, \alpha', \, \K) - \phi (c, \, \alpha, \, \K)|_c \, \leq \, |\alpha' - \alpha|_c$.
  \item $|\phi (c, \, \alpha, \, \K)|_c \, \leq \, |\alpha|_c$.
  \item $|\phi (c, \, \alpha, \, \K') - \phi (c, \, \alpha, \, \K)|^2_c \, \leq \, 4 |\alpha|_c \, \dist \left( \K' \cap B( |\alpha|_c), \, \K \cap B( |\alpha|_c) \right)$.
\end{enumerate}
\end{prop}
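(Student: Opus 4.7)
The plan is to identify $\phi(c, \alpha, \K)$ as the $c$-projection of $\alpha$ onto the closed convex set $\K \cap \Nu^\bot$, and then read off (1)--(3) from the variational characterization of that projection. Rewriting $F_\alpha(f) := \inner{f}{\alpha}_c - \frac{1}{2}|f|_c^2 = \frac{1}{2}|\alpha|_c^2 - \frac{1}{2}|f - \alpha|_c^2$ shows that $\phi$ is equivalent to minimising $|f - \alpha|_c$ over $\K \cap \Nu^\bot$. Since $\inner{\cdot}{\cdot}_c$ restricted to $\Nu^\bot$ is a genuine inner product, the Hilbert projection theorem delivers existence, uniqueness, and the variational inequality
\[
\inner{\alpha - f^*}{g - f^*}_c \leq 0, \qquad g \in \K \cap \Nu^\bot,
\]
where $f^* := \phi(c, \alpha, \K)$. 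I would also record a $\Nu$-invariance of the constraint set: $\Nu \subseteq \K$ together with convexity and closedness of $\K$ implies $\K + \Nu = \K$ (take $\lambda \uparrow 1$ in the convex combination $\lambda x + (1-\lambda)(n / (1-\lambda))$ of $x \in \K$ and $n/(1-\lambda) \in \Nu \subseteq \K$), so the orthogonal projection $h \mapsto h_\bot$ onto $\Nu^\bot$ sends $\K$ into $\K \cap \Nu^\bot$ without changing the $c$-distance to any point of $\Nu^\bot$. The normalisation $\mathsf{trace}(c) = 1$ also gives $|\cdot|_c \leq |\cdot|$, which will be used to transport Euclidean Hausdorff closeness to $c$-pseudo-norm closeness.

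Items (1) and (2) come directly from the variational inequality. For (1), I apply the inequality to the $\alpha$-problem tested at $g = \phi(c, \alpha', \K)$ and to the $\alpha'$-problem tested at $g = \phi(c, \alpha, \K)$; summation gives $|f - f'|_c^2 \leq \inner{\alpha - \alpha'}{f - f'}_c$, and Cauchy--Schwarz in the $c$-inner-product finishes. For (2), note that $\phi(c, 0, \K) = 0$ (the objective $F_0(f) = -\frac{1}{2}|f|_c^2$ is maximised at $|f|_c = 0$ and $\Nu \cap \Nu^\bot = \{0\}$), and apply (1) with $\alpha' = 0$.

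Item (3) uses the same idea tested across two different constraint sets. Letting $f := \phi(c, \alpha, \K)$ and $f' := \phi(c, \alpha, \K')$, the two variational inequalities tested at arbitrary $g \in \K \cap \Nu^\bot$ and $g' \in \K' \cap \Nu^\bot$ add, after invoking (2) and the bound $|\alpha - f|_c \leq |\alpha|_c + |f|_c \leq 2|\alpha|_c$, to yield
\[
|f - f'|_c^2 \,\leq\, 2|\alpha|_c \pare{|f' - g|_c + |f - g'|_c}.
\]
It then remains to realise each of $|f' - g|_c$ and $|f - g'|_c$ as at most the Hausdorff distance $\dist\pare{\K' \cap B(|\alpha|_c), \K \cap B(|\alpha|_c)}$ appearing in (3). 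Item (2) places $f$ and $f'$ in a ball of $c$-radius $|\alpha|_c$, and the Hausdorff hypothesis produces some $h \in \K$ that is Euclidean-close to $f'$; the $\Nu$-invariance lets me replace $h$ by $h_\bot \in \K \cap \Nu^\bot$ at the same $c$-distance to $f' \in \Nu^\bot$, and the inequality $|\cdot|_c \leq |\cdot|$ converts Euclidean approximation into $c$-approximation. The symmetric construction produces $g' \in \K' \cap \Nu^\bot$, and plugging in delivers (3).

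The main technical hurdle is precisely this last bookkeeping step for (3): ensuring that the approximants coming from the Hausdorff hypothesis sit inside the ball $B(|\alpha|_c)$ where the hypothesis has force, and that the passage between Euclidean and $c$-pseudo-norm distances is tight enough to produce the factor $4|\alpha|_c$ on the right-hand side. Items (1) and (2) are essentially standard projection facts once the variational formulation is set up.
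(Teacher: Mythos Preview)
Your argument is correct and follows essentially the same line as the paper's: identify $\phi$ as the $c$-projection of $\alpha$ onto $\K \cap \Nu^\bot$, and then read off (1)--(3) from the first-order (variational) inequality combined with Cauchy--Schwarz. The only minor differences are that the paper derives (2) directly by testing the variational inequality at $0 \in \K$ rather than via (1) with $\alpha' = 0$, and in (3) the paper takes as its test points the $c$-projections $\pr_\K(\phi')$, $\pr_{\K'}(\phi)$ (your $g, g'$), leaving implicit the $\K + \Nu = \K$ invariance that you spell out.
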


\begin{proof}
\noindent \textbf{(1)} Let $\phi' := \phi (c, \, \alpha', \, \K)$ and $\phi := \phi (c, \, \alpha, \, \K)$. First-order conditions for optimality imply that $\inner{\alpha' - \phi'}{\phi - \phi'}_c \leq 0$ and $\inner{\alpha - \phi}{\phi' - \phi}_c \leq 0$. Adding the previous two inequalities gives $\inner{\phi' - \phi - (\alpha' - \alpha)}{\phi' - \phi}_c \leq 0$, or, equivalently, $|\phi' - \phi|^2_c \leq \inner{\alpha' - \alpha}{\phi' - \phi}_c \leq |\alpha' - \alpha|_c |\phi' - \phi|_c$, which proves the result.

\smallskip
\noindent \textbf{(2)} Let $\phi := \phi (c, \, \alpha, \, \K)$. Since $0 \in \K$, first-order conditions give $\inner{- \phi}{\alpha - \phi}_c \leq 0$. In other words, $|\phi|_c^2 = \inner{\phi}{\phi}_c \leq \inner{\alpha}{ \phi}_c \leq |\alpha|_c |\phi|_c$, which gives $|\phi|_c \leq |\alpha|_c$.

\smallskip
\noindent \textbf{(3)} Let $\phi' := \phi (c, \, \alpha, \, \K')$ and $\phi := \phi (c, \, \alpha, \, \K)$. Let $\pr_{\K}(\phi')$ denote the projection of $\phi'$ on $\K \cap \Nu^\bot$ under the inner product $\inner{\cdot}{\cdot}_c$, and define $\pr_{\K'}(\phi)$ similarly. The projections are unique because we are restricting attention to $\Nu^\bot$. By first-order conditions, $\inner{\alpha - \phi'}{\pr_{\K'}(\phi) - \phi'}_c \leq 0$ and $\inner{\alpha - \phi}{\pr_{\K}(\phi') - \phi}_c \leq 0$, which we rewrite as $\inner{\alpha - \phi'}{\phi - \phi'}_c \leq \inner{\alpha - \phi'}{\phi - \pr_{\K'}(\phi)}_c$ and $\inner{\alpha - \phi}{\phi' - \phi}_c \leq \inner{\alpha - \phi}{\phi' - \pr_{\K}(\phi')}_c$. Adding them up and using the Cauchy-Schwartz inequality, we get $|\phi' - \phi|_c^2 \leq |\alpha - \phi'|_c |\phi - \pr_{\K'}(\phi)|_c + |\alpha - \phi|_c |\phi' - \pr_{\K}(\phi')|_c$.
Using now statement (2), the definition of ``$\dist$'' from \eqref{eq: dist}, as well as the fact that $|x|_c \leq |x|$ for all $x \in \Real^d$, statement (3) is straightforward.
\end{proof}

\begin{cor} \label{cor: conv of sols to determ conc maxim prob}
Let $(\alpha^n)_{n \in \Natura}$ be a collection of vectors of $\Real^d$ and $(\K^n)_{n \in \Natura}$ be a collection of closed, convex subsets of $\Real^d$ with $\Nu \subseteq \K^n$ for all $n \in \Natura$. If $\limn |\alpha^n - \alpha^\infty|_c = 0$ and $\climn \K^n = \K^\infty$, then $\limn |\phi (c, \, \alpha^n, \, \K^n) - \phi (c, \, \alpha^\infty, \, \K^\infty)| = 0$, in the notation of \eqref{eq: p's}.
\end{cor}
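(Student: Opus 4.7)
The plan is to split the difference via the intermediate vector $\phi(c, \alpha^\infty, \K^n)$ and apply the three parts of Proposition \ref{prop: determin conc maxim} to control each piece, obtaining convergence first in the pseudo-norm $|\cdot|_c$ and then upgrading to the Euclidean norm using the fact that the two agree (up to a multiplicative constant) on the finite-dimensional subspace $\Nu^\bot$ in which the argmaxes live.

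First, by the triangle inequality for $|\cdot|_c$,
\[
|\phi(c, \alpha^n, \K^n) - \phi(c, \alpha^\infty, \K^\infty)|_c \leq |\phi(c, \alpha^n, \K^n) - \phi(c, \alpha^\infty, \K^n)|_c + |\phi(c, \alpha^\infty, \K^n) - \phi(c, \alpha^\infty, \K^\infty)|_c.
\]
Proposition \ref{prop: determin conc maxim}(1) dominates the first summand by $|\alpha^n - \alpha^\infty|_c$, which tends to zero by hypothesis. For the second summand, Proposition \ref{prop: determin conc maxim}(3) gives
\[
|\phi(c, \alpha^\infty, \K^n) - \phi(c, \alpha^\infty, \K^\infty)|^2_c \leq 4 |\alpha^\infty|_c \, \dist\pare{\K^n \cap B(|\alpha^\infty|_c), \, \K^\infty \cap B(|\alpha^\infty|_c)},
\]
and the assumption $\climn \K^n = \K^\infty$, specialized to the radius $m = |\alpha^\infty|_c \in \Real_+$, forces this upper bound to tend to zero. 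Together, these two estimates give $\limn |\phi(c, \alpha^n, \K^n) - \phi(c, \alpha^\infty, \K^\infty)|_c = 0$.

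It remains to upgrade $|\cdot|_c$-convergence to Euclidean convergence. Every value of $\phi(c, \alpha, \K)$ lies in $\Nu^\bot$ by construction, so the difference $\phi(c, \alpha^n, \K^n) - \phi(c, \alpha^\infty, \K^\infty)$ belongs to $\Nu^\bot$. On this subspace, if $x \in \Nu^\bot$ satisfies $|x|_c = 0$ then $\inner{x}{cx} = 0$; since $c$ is symmetric and nonnegative-definite this forces $cx = 0$, i.e.\ $x \in \Nu \cap \Nu^\bot = \set{0}$. Thus $|\cdot|_c$ is a genuine norm on the finite-dimensional space $\Nu^\bot$, and by equivalence of norms there is a constant $k > 0$ with $|x| \leq k |x|_c$ for every $x \in \Nu^\bot$. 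Applying this inequality to the difference in question yields the conclusion.

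The only genuinely delicate point is the last one: the estimates in Proposition \ref{prop: determin conc maxim} are natively phrased in the pseudo-norm $|\cdot|_c$, whereas the corollary is stated in the Euclidean norm, and bridging the two requires noting that the feasibility restriction $f \in \K \cap \Nu^\bot$ (and hence $\Nu \subseteq \K$, an assumption already in force) is exactly what makes $|\cdot|_c$ non-degenerate on the relevant subspace.
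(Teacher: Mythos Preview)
Your proof is correct and follows exactly the approach the paper takes: split via the intermediate point $\phi(c,\alpha^\infty,\K^n)$, apply parts (1) and (3) of Proposition~\ref{prop: determin conc maxim}, and then pass from $|\cdot|_c$ to the Euclidean norm using the equivalence of these norms on $\Nu^\bot$. The paper's own proof is a one-line sketch of precisely this argument, so your write-up is simply a faithful expansion of it.
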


\begin{proof}
This follows directly from statements (1) and (3) of Proposition \ref{prop: determin conc maxim}, as long as one notices that $|\cdot|_c$ and $|\cdot|$ are equivalent norms on $\Nu^\bot$.
\end{proof}

\subsection{The consequence of \eqref{F-CONV}}
The purpose here is to show that the sequence $(\ta^n)_{n \in \Natural}$ of \eqref{eq: tildes} converges to $\ta^\infty$ in some sense to be made precise below. We start with Lemma \ref{lem: pred proj}, which is a result on convergence of predictable projections. Before doing so, some remarks on the extended definition of predictable projections will be given; the interested reader is referred to \cite{MR1943877} for more details. Start with some process $\chi$ that is measurable with respect to the product $\sigma$-algebra $\oF \otimes \B(\Real_+)$, where $\B(\Real_+)$ denotes the Borel-$\sigma$-algebra on $\Real_+$. Consider also some filtration $\bF = (\F_t)_{t \in \Real_+}$ satisfying \eqref{SAND}. If $\chi$ is a nonnegative process, there exists a $[0, + \infty]$-valued, $\bF$-predictable process, uniquely defined up to $\oprob$-indistinguishability, which is called the \textsl{predictable projection} of $\chi$ with respect to $(\bF, \oprob)$ and is denoted by $\chi^{\Pre(\bF, \oprob)}$, such that $\chi^{\Pre(\bF, \oprob)}_\tau  = \oexpec [\chi_\tau \such \F_{\tau -}]$ for all finite $\bF$-predictable stopping times $\tau$, where $\oexpec$ denotes expectation with respect to $\oprob$. If $\chi$ is \emph{any} $\Real$-valued measurable process, split as usual $\chi = \chi_+ - \chi_-$, where $\chi_+$ is the positive part, and $\chi_-$ the negative part, of $\chi$.  Of course, $|\chi| = \chi_+ + \chi_-$. On the $\bF$-predictable set $\{ |\chi|^{\Pre(\bF, \oprob)} < + \infty \} = \{ \chi_+^{\Pre(\bF, \oprob)} < + \infty, \, \chi_-^{\Pre(\bF, \oprob)} < + \infty \}$, define $\chi^{\Pre(\bF, \oprob)} \, := \, \chi_+^{\Pre(\bF, \oprob)} - \chi_-^{\Pre(\bF, \oprob)}$; on $\{ |\chi|^{\Pre(\bF, \oprob)} = + \infty \}$, define $\chi^{\Pre(\bF, \oprob)} \, := \, + \infty$. The extended predictable projection $\chi^{\Pre(\bF, \oprob)}$ thus defined still satisfies $\chi^{\Pre(\bF, \oprob)}_\tau = \oexpec[\chi_\tau \such \F_{\tau -}]$ for all finite $\bF$-predictable stopping times $\tau$, if one agrees that $\oexpec[\chi_\tau \such \F_{\tau -}] = + \infty$ on $\{ \oexpec[|\chi_\tau| \such \F_{\tau -}] = + \infty \}$.

\begin{lem} \label{lem: pred proj}
Consider a $\oF \otimes \B(\Real_+)$-measurable process $\chi$ such that $\int_0^T \big| \chi_t^{\Pre(\ubF, \oprob)} \big| \ud G_t < + \infty$ for all $T \in \Real_+$. If the collection $(\bF^n)_{n \in \Natura}$ satisfies \eqref{F-CONV}, we have:
\begin{enumerate}
  \item $\int_0^T \big| \chi^{\Pre(\bF^n, \oprob)} \big| \ud G_t < \infty$ for all $n \in \Natura$ and $T \in \Real_+$, and
  \item $\plimn \int_0^T \big| \chi_t^{\Pre(\bF^n, \oprob)} - \chi_t^{\Pre(\bF^\infty, \oprob)} \big| \ud G_t = 0$, for all $T \in \Real_+$.
\end{enumerate}
\end{lem}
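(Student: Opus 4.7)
The plan is to extend \eqref{F-CONV}---which directly concerns only indicator processes---to the full class of measurable $\chi$ satisfying the integrability hypothesis, via successive approximation. Two structural facts set the stage: $G$ is continuous (since $S$ has continuous paths, so do $[S,S]$ and hence $G$), so jumps of c\`adl\`ag processes are $\ud G$-null; and $G$ is $\ubF$-predictable, thus $\bF^n$-predictable for every $n \in \Natura$, so the tower property for predictable projections together with the defining identity $\oexpec\int_0^T Y_t \ud G_t = \oexpec\int_0^T Y^{\Pre(\bF^n, \oprob)}_t \ud G_t$ for all nonnegative $\oF\otimes\B(\Real_+)$-measurable $Y$ is available.

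I would start from the core computation for elementary processes $\chi = \indic_A \indic_{(s,t]}$, $A \in \oF$. Here $\chi^{\Pre(\bF^n,\oprob)}_u = \oprob[A \such \F^n_{u-}] \indic_{(s,t]}(u)$, and because the c\`adl\`ag martingale $u \mapsto \oprob[A\such\F^n_u]$ has at most countably many jumps and $G$ is continuous, its left- and right-continuous versions agree $\ud G$-almost everywhere. Hence
\[
\int_0^T \bigl| \chi^{\Pre(\bF^n,\oprob)}_u - \chi^{\Pre(\bF^\infty,\oprob)}_u \bigr| \, \ud G_u = \int_{s\wedge T}^{t\wedge T} \bigl| \oprob[A\such\F^n_u] - \oprob[A\such\F^\infty_u] \bigr| \ud G_u,
\]
which vanishes in $\oprob$-probability by \eqref{F-CONV}. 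Linearity extends the result to finite sums of such elementary processes, and a monotone-class argument---with dominated convergence applied under the uniform bound $|\chi^{\Pre(\bF^n,\oprob)}| \leq \|\chi\|_\infty$---then covers every bounded $\oF\otimes\B(\Real_+)$-measurable $\chi$.

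For the finiteness claim (1) and the unbounded part of (2), the first step would be to split $\chi = \chi_+ - \chi_-$. The tower property $(\chi_\pm^{\Pre(\bF^n,\oprob)})^{\Pre(\ubF,\oprob)} = \chi_\pm^{\Pre(\ubF,\oprob)}$ combined with integration against $\ud G$ yields $\oexpec\int_0^T \chi_\pm^{\Pre(\bF^n,\oprob)} \ud G_t = \oexpec\int_0^T \chi_\pm^{\Pre(\ubF,\oprob)} \ud G_t$, and localization along a sequence $\sigma_k \uparrow \infty$ of $\ubF$-predictable stopping times that bound $\int_0^{\sigma_k}(\chi_+^{\Pre(\ubF,\oprob)}+\chi_-^{\Pre(\ubF,\oprob)}) \ud G_t$ produces $\oprob$-a.s.\ finiteness of $\int_0^T |\chi^{\Pre(\bF^n,\oprob)}| \ud G_t$ for every $n$, settling (1). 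For (2) in the unbounded case, I would truncate $\chi^{(m)} := (-m)\vee\chi\wedge m$, apply the bounded result to $\chi^{(m)}$, and control the residuals $(\chi-\chi^{(m)})^{\Pre(\bF^n,\oprob)}$ via the same tower-property identity applied to $(\chi-\chi^{(m)})_\pm$, sending first $n\to\infty$ and then $m\to\infty$.

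The main obstacle I anticipate is that the hypothesis controls $|\chi^{\Pre(\ubF,\oprob)}|$ but not $|\chi|^{\Pre(\ubF,\oprob)} = \chi_+^{\Pre(\ubF,\oprob)} + \chi_-^{\Pre(\ubF,\oprob)}$, which can be strictly larger through cancellation between $\chi_+$ and $\chi_-$; so neither $\chi_+$ nor $\chi_-$ carries a direct $\ud G$-integrability bound. The $\ubF$-predictable localization described above circumvents this, but calibrating the truncation level $m$ against the localization level $k$ so that the tail error vanishes uniformly in $n$ before the limit $n\to\infty$ is taken is the delicate part of the argument.
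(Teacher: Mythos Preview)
Your approach mirrors the paper's: establish convergence for elementary processes $\indic_A\indic_{]s,t]}$ directly from \eqref{F-CONV}, extend to a larger class by approximation, then localize to reach the general $\chi$. The paper's packaging differs only cosmetically: after reducing to $\ubF$-stopping times $T$ with $\oexpec[G_T]<\infty$, it introduces the finite measure $\mu_T(\xi) := \oexpec\int_0^T \xi_t \ud G_t$ and uses $\Lb^1(\mu_T)$-density of simple processes together with the contraction $\mu_T(|\xi^{\Pre(\bF,\oprob)}|)\leq\mu_T(|\xi|)$, rather than a monotone-class argument, to pass from simple to $\mu_T$-integrable $\xi$.

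The obstacle you flag at the end is genuine, and the paper does not resolve it along the lines you sketch. Its localizing device is the $\ubF$-predictable process $\Lambda := \int_0^\cdot |\chi_t|^{\Pre(\ubF,\oprob)} \ud G_t$ (note: $|\chi|^{\Pre}$, \emph{not} $|\chi^{\Pre}|$), and the proof simply uses $\{\Lambda_T \leq m\}\uparrow\Omega$ $\oprob$-a.s.; in other words, it effectively works under the stronger hypothesis $\int_0^T |\chi_t|^{\Pre(\ubF,\oprob)} \ud G_t < \infty$. That is exactly the quantity your stopping-time localization needs as well, since $\chi_+^{\Pre(\ubF,\oprob)}+\chi_-^{\Pre(\ubF,\oprob)} = |\chi|^{\Pre(\ubF,\oprob)}$. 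Under the hypothesis as literally stated (control only on $|\chi^{\Pre(\ubF,\oprob)}|$) neither argument closes without an additional step; in the paper's sole application, Corollary~\ref{cor: F-CONV means conv of coeffs} with $\chi=c\,\oa$, the stronger condition is available, so the discrepancy is harmless there.
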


\begin{proof}
Observe initially that, since $G$ is an increasing $\ubF$-predictable process, it suffices to show the validity of (1) and (2) for all finite $\ubF$-stopping-times $T$ such that $\oexpec[G_T] < + \infty$, instead of all deterministic times $T \in \Real_+$. Fix then a $\ubF$-stopping-time $T$ with $\oexpec[G_T] < + \infty$ and consider the positive finite measure $\mu_T$ on $\left( \Omega \times \Real_+, \, \oF \otimes \B(\Real_+) \right)$ defined via $\mu_T \left( A \, \times \, ]t_1, t_2] \right) := \oexpec \big[ \indic_A \pare{ G_{t_2 \wedge T} - G_{t_1 \wedge T}} \big]$ for $A \in \oF$ and $t_1 < t_2$ times in $\Real_+$. By a slight abuse of notation, for a measurable process $\xi$ with $\oexpec \big[ \int_0^T |\xi_t| \, \ud G_t \big] < \infty$, let $\mu_T(\xi) := \int \xi \ud \mu_T = \oexpec \big[ \int_0^T \xi_t \, \ud G_t \big]$. Note that, for any $\bF$ satisfying \eqref{SAND},
\begin{equation} \label{eq: pred proj is contr}
\mu_T \big( |\xi^{\Pre(\bF, \oprob)}| \big) \leq \mu_T \big(|\xi|^{\Pre(\bF, \oprob)}\big) = \mu_T \big(|\xi|\big), \text{ for all measurable processes } \xi.
\end{equation}
Also, it is obvious that $\limn \mu_T(|\xi^n|) = 0$ implies $\plimn \int_0^T |\xi^n_t| \, \ud G_t = 0$.

\smallskip
\noindent \textbf{(1)}
Consider the $\ubF$-predictable process $\Lambda := \int_0^\cdot |\chi_t|^{\Pre(\ubF, \oprob)} \ud G_t$. For each $m \in \Real_+$, the inequalities $\int_0^\cdot |\chi_t \indic_{\set{\Lambda_t \leq m}}|^{\Pre(\ubF, \oprob)} \ud G_t = \int_0^\cdot |\chi_t |^{\Pre(\ubF, \oprob)} \indic_{\set{\Lambda_t \leq m}} \ud G_t \leq m$ hold. Then, for $n \in \Natura$ and $m \in \Real_+$, $\mu_T \big( |\chi \indic_{\set{\Lambda \leq m}}|^{\Pre(\bF^n, \oprob)} \big) = \mu_T \big( |\chi \indic_{\set{\Lambda \leq m}}|^{\Pre(\ubF, \oprob)} \big) \leq m$. This means that, $\oprob$-a.s, $\int_0^T |\chi_t|^{\Pre(\bF^n, \oprob)} \ud G_t < \infty$ on $\set{ \Lambda_T \leq m }$ for all $n \in \Natura$ and $m \in \Real_+$. Since $\uparrow \lim_{m \to \infty} \set{\Lambda_T \leq m } = \Omega$, $\oprob$-a.s., we obtain the result of statement (1).

\smallskip
\noindent \textbf{(2)}
A process $\xi$ that is a finite linear combination of processes of the form $\indic_A \indic_{]t_1, t_2]}$ for $A \in \oF$ and $t_1 < t_2$ will be called \textsl{simple measurable}. Since $\pare{ \indic_A \indic_{]t_1, t_2]}}^{\Pre(\bF^n, \oprob)}_t = \prob[A \such \F_{t-}] \indic_{]t_1, t_2]}$ holds for all $t \in \Real_+$,
the continuity of $G$ and \eqref{F-CONV} will give
\begin{equation} \label{eq: F-CONV, simple}
\limn \mu_T \pare{ \left| \xi^{\Pre(\bF^n, \oprob)} - \xi^{\Pre(\bF^\infty, \oprob)}  \right| }= 0
\end{equation}
for any simple measurable process $\xi$. A simple density argument shows that for all measurable $\xi$ with $\mu_T(|\xi|) < \infty$ and for any $\epsilon > 0$, there exists a simple measurable $\xi'$ with $\mu_T(|\xi' - \xi|) < \epsilon$. Then, \eqref{eq: pred proj is contr} implies that \eqref{eq: F-CONV, simple} is valid whenever $\xi$ is measurable with $\mu_T(|\xi|) < \infty$. Now, pick any measurable $\chi$ that satisfies $\int_0^T \big| \chi_t^{\Pre(\ubF, \oprob)} \big| \ud G_t < + \infty$. For any $m \in \Real_+$, we have $\mu_T( |\chi \indic_{\set{\Lambda \leq m}}|) \leq m < \infty$ (remember that $\Lambda := \int_0^\cdot |\chi_t|^{\Pre(\ubF, \oprob)} \ud G_t$). Then, $\limn \mu_T \big( \big| \chi^{\Pre(\bF^n, \oprob)} \indic_{\set{\Lambda \leq m}} - \chi^{\Pre(\bF^\infty, \oprob)} \indic_{\set{\Lambda \leq m}} \big| \big) = 0$
holds by \eqref{eq: F-CONV, simple}. In other words, $\plimn \int_0^T \big| \chi^{\Pre(\bF^n, \oprob)}_t - \chi^{\Pre(\bF^\infty, \oprob)}_t \big| \ud G_t = 0$ on $\{ \Lambda_T \leq m \}$ for all $m \in \Natural$, and since $\uparrow \lim_{m \to \infty} \set{\Lambda_T \leq m } = \Omega$, $\oprob$-a.s., statement (2) is proved.
\end{proof}

\begin{cor} \label{cor: F-CONV means conv of coeffs}
We have $\plimn \int_0^T \big| c_t ( \ta^n_t - \ta^\infty_t) \big| \ud G_t = 0$ for all $T \in \Real_+$.
\end{cor}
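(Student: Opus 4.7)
The plan is to realise $c_t \ta^n_t$ as the $\bF^n$-predictable projection of $c_t \ta^{\obF}_t$ (applied coordinate-wise), where $\ta^{\obF} := a^{(\obF, \, \oprob)}$ denotes the drift density of $S$ under $(\obF, \oprob)$, and then to invoke Lemma \ref{lem: pred proj} with $\chi := c \ta^{\obF}$. More precisely, I claim that for every $n \in \Natura$ the identity $c_t \ta^n_t = \pare{c_t \ta^{\obF}_t}^{\Pre(\bF^n, \oprob)}$ holds $\oprob \otimes \ud G_t$-a.e. Indeed, by \eqref{SEMI-MART} and Stricker's theorem (noting $\ubF \subseteq \bF^n \subseteq \obF$), $S$ is an $(\bF^n, \oprob)$-semimartingale for each $n$. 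Writing the canonical decompositions $S = A^{\obF} + M^{\obF} = A^n + M^n$ and integrating any bounded $\bF^n$-predictable $H$ against both, the fact that $\int H \ud M^{\obF}$ is an $\obF$-local martingale (since $H$ is also $\obF$-predictable) gives, after localisation, $\oexpec \bra{ \int_0^\tau H_t \ud A^{\obF}_t } = \oexpec \bra{ \int_0^\tau H_t \ud A^n_t }$ for all $\bF^n$-stopping times $\tau$. Hence $A^n$ is the $\bF^n$-dual predictable projection of $A^{\obF}$, and since $A^{\obF} = \int_0^\cdot c_t \ta^{\obF}_t \ud G_t$ with $G$ continuous and $\ubF$-predictable, the classical formula for the dual predictable projection of an absolutely continuous process against a continuous predictable reference measure yields the claimed pointwise identity.

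Next I would verify the integrability hypothesis of Lemma \ref{lem: pred proj}. Specialising the identity above to $\bF^n = \ubF$ gives $\pare{c_t \ta^{\obF}_t}^{\Pre(\ubF, \oprob)} = c_t \ta^{\ubF}_t$, with $\ta^{\ubF} := a^{(\ubF, \oprob)}$. The relations $G = \mathsf{trace}[S,S]$ and $[S, S] = \int_0^\cdot c_t \ud G_t$ force $\mathsf{trace}(c_t) = 1$ for $\ud G_t$-a.e.\ $t$, so the operator norm of $c_t$ is at most $1$ and hence $c_t^2 \leq c_t$ in the positive-semidefinite order. This yields $\abs{c_t \ta^{\ubF}_t}^2 \leq \inner{\ta^{\ubF}_t}{c_t \ta^{\ubF}_t}$, and Cauchy--Schwarz combined with \eqref{NUPBR pred} (applied to the filtration $\ubF$) gives
\[
\int_0^T \abs{c_t \ta^{\ubF}_t} \ud G_t \, \leq \, \sqrt{ G_T \int_0^T \inner{\ta^{\ubF}_t}{c_t \ta^{\ubF}_t} \ud G_t } \, < \, \infty, \quad \oprob\text{-a.s.},
\]
for every $T \in \Real_+$, since $G_T < \infty$ $\oprob$-almost surely.

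Applying Lemma \ref{lem: pred proj} componentwise to $\chi := c \ta^{\obF}$ now delivers $\plimn \int_0^T \big| \pare{c_t \ta^{\obF}_t}^{\Pre(\bF^n, \oprob)} - \pare{c_t \ta^{\obF}_t}^{\Pre(\bF^\infty, \oprob)} \big| \ud G_t = 0$, which by the identification in the first paragraph is exactly $\plimn \int_0^T \abs{c_t \pare{ \ta^n_t - \ta^\infty_t}} \ud G_t = 0$, as required. I expect the main obstacle to be this identification step: while the fact that the $\bF^n$-drift is the $\bF^n$-dual predictable projection of the $\obF$-drift is essentially folklore, it requires careful bookkeeping with Stricker's theorem, localisation of the $\obF$-local martingale $M^{\obF}$, and the passage from dual projections of the finite variation process itself to ordinary predictable projections of its density against the continuous $\ubF$-predictable reference measure $G$.
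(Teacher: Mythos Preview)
Your proposal is correct and follows essentially the same route as the paper: identify $c\,\ta^n$ as the $(\bF^n,\oprob)$-predictable projection of $c\,\oa$ (the paper's notation for your $c\,\ta^{\obF}$) and then invoke Lemma~\ref{lem: pred proj}. The paper states the key identity $(c\oa)^{\Pre(\bF^n,\oprob)}=c\,\ta^n$ without argument, whereas you supply the Stricker/dual-projection justification and also verify explicitly the integrability hypothesis $\int_0^T |(c\oa)^{\Pre(\ubF,\oprob)}|\,\ud G_t<\infty$, which the paper leaves implicit; your added detail is sound and fills in exactly the step you flagged as the main obstacle.
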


\begin{proof}
Call $\oa := a^{(\obF, \, \oprob, \Real^d)}$. The statement of the corollary follows from Lemma \ref{lem: pred proj} as soon as one notices the following: the form of the semimartingale decomposition of $S$ under $(\bF^n, \prob^n)$, $n \in \Natural$, implies that $(c \oa)^{\Pre(\bF^n, \, \oprob)} = c \ta^n$ for all $n \in \Natura$. (Observe here that, since $c$ is predictable with respect to each $\bF^n$, $n \in \Natural$, we have $c^{\Pre(\bF^n, \, \oprob)} = c$.)
\end{proof}

\subsection{The proof of \eqref{eq: main 1}} We proceed now to show the validity of \eqref{eq: main 1}. In accordance to the deterministic notation of \S\ref{subsec: determ conc maxim prob}, for any $d$-dimensional processes $\xi$ and $\chi$ we set $\inner{\xi}{\chi}_c = (\inner{\xi_t}{\chi_t}_{c_t})_{t \in \Real_+} = (\inner{\xi_t}{c_t \chi_t} )_{t \in \Real_+}$ as well as $|\xi|_c = (|\xi_t|_{c_t})_{t \in \Real_+} = \big( \sqrt{\inner{\xi_t}{c_t \xi_t}} \big)_{t \in \Real_+}$.

For each $n \in \Natura$, write $\log \tX^n$ of \eqref{eq: tildes} in its $(\obF , \oprob)$-decomposition:
\begin{equation} \label{eq: numeraire_fixed_G}
\log \tX^n \, := \, \int_0^\cdot \pare{ \inner{\tphi^n_t}{a_t}_{c_t} - \frac{1}{2}  |\tphi^n_t|^2_{c_t} } \ud G_t + \int_0^\cdot \tphi^n_t \ud M^{(\obF, \oprob)}_t.
\end{equation}
Define $\tg^n := \inner{\tphi^n}{a}_{c} - |\tphi^n|^2_{c} / 2$ for all $n \in \Natura$; then,  $\int_0^\cdot \tg^n_t \ud G_t$ is the $(\obF, \oprob)$-growth of $\tX^n$. In order to prove \eqref{eq: main 1}, we need to show that $\plimn \int_0^T |\tg^n_t - \tg^\infty_t| \ud G_t = 0$, as well as $\plimn \int_0^T |\tphi^n_t - \tphi^\infty_t|^2_{c_t} \ud G_t = 0$.

First we show that $\plimn \int_0^T |\tg^n_t - \tg^\infty_t| \ud G_t = 0$. With some abuse of notation, let $G$ also denote the \emph{random} measure  induced by $G$ on $\Real_+$, i.e., for all $I \in \B(\Real_+)$ let $G (I) := \int_{I} \ud G_t$ . Jointly, Corollary \ref{cor: F-CONV means conv of coeffs} and Corollary \ref{cor: conv of sols to determ conc maxim prob} imply that, for all $T \in \Real_+$,
\begin{equation} \label{eq: conv of psi's}
\prob \bra{\limn |\tphi^n_t - \tphi^\infty_t|_{c_t} = 0, \text{ for } G \text{-a.e. } t \in [0, T] } = 1.
\end{equation}
This certainly implies that, for all $T \in \Real_+$, $\prob \bra{\limn \tg^n_t = \tg^\infty_t, \text{ for } G \text{-a.e. } t \in [0, T] } = 1$.
Now, if $\ophi := \varphi^{(\obF, \, \oprob, \, \Real^d)} = \oa$ and $\og := \inner{\ophi}{\oa}_c - |\ophi|^2_{c} / 2 = |\oa|^2_c / 2$, we have $0 \leq \tg^n \leq \og$ for all $n \in \Natura$, since $\int_0^\cdot \og_t \ud G_t$ is the growth of the \num \ portfolio with market data $(\obF, \oprob, \Real^d)$. The \eqref{NUPBR} condition reads $\int_0^T \og_t \ud G_t < \infty$ for all $T \in \Real_+$; therefore, in view of the dominated convergence theorem, we have $\plimn \int_0^T |\tg^n_t - \tg^\infty_t| \ud G_t = 0$.

The proof of $\plimn \int_0^T |\tphi^n_t - \tphi^\infty_t|^2_{c_t} \ud G_t = 0$ follows along the same lines. Statement (2) of Proposition \ref{prop: determin conc maxim} gives $|\tphi^n|_c \leq 2 |\oa|_c$ for all $n \in \Natura$. Since $\int_0^T |\oa_t|^2_{c_t} \ud G_t < \infty$ for all $T \in \Real_+$ from \eqref{NUPBR}, \eqref{eq: conv of psi's} gives $\plimn \int_0^T |\tphi^n_t - \tphi^\infty_t|^2_{c_t} \ud G_t = 0$, where the dominated convergence theorem was used again.

\subsection{A positive-martingale convergence result}

The next line of business is to show \eqref{eq: main 2}, and for this we have to establish that the sequence
$(a^n - \ta^n)_{n \in \Natural}$ converges to zero in some sense. For each $n \in \Natural$, define the density process $Z^n := (\ud \prob^n / \ud \oprob) |_{\bF^n}$, and consider the following multiplicative decomposition of $Z^n$, following from its corresponding additive decomposition, as is presented for example in Theorem III.4.11 (page 182) of \cite{MR1943877}:
\begin{equation} \label{eq: R-N densities}
Z^n = \exp \pare{ \int_0^\cdot \zeta^n_t \, \ud M_t^{(\bF^n, \oprob)} - \frac{1}{2} \int_0^\cdot |\zeta^n_t|^2_{c_t} \, \ud G_t} N^n.
\end{equation}
Here, for each $n \in \Natural$, $N^n$ is a strictly positive $\bF^n$-local $\oprob$-martingale with $[M^{(\bF^n, \oprob)}, N^n] = 0$, i.e., $N^n$ is strongly orthogonal to $M^{(\bF^n, \oprob)}$.
A simple application of Girsanov's theorem shows that $\zeta^n = a^n - \ta^n$. Therefore, we first have to establish some result that connects convergence of $(Z^n)_{n \in \Natural}$ to $Z^\infty = 1$ to convergence to zero of the quadratic variation of their stochastic logarithms. This is done in Theorem \ref{thm: conv of pos loc marts to one}. Then, Corollary \ref{cor: convergence of zeta's to zero} gives us convergence to zero of $(\zeta^n)_{n \in \Natural}$, in an appropriate sense.

In the course of the proof of Theorem \ref{thm: conv of pos loc marts to one} we make use of (one side of) the \textsl{Davis inequality}. Namely, if $L$ is a one-dimensional $\bF$-local $\oprob$-martingale with quadratic variation $[L, L]$, then $\oexpec \big[ \sqrt{[L, L]_T} \big] \leq 6 \oexpec \big[ \sup_{t \in [0, T]} |L_t| \big]$ for all $T \in \Real_+$; see Theorem 4.2.12, page 213 of \cite{MR1906715}. (Remember that $\oexpec$ denotes expectation under the probability $\oprob$.) In particular, if a sequence of $(L^n)_{n \in \Natural}$, where each $L^n$ is a $\bF^n$-local $\oprob$-martingale for each $n \in \Natural$, satisfies $\Lb^1(\oprob)$-$\limn \sup_{t \in [0, T]} |L^n_t| = 0$, then also $\Lb^1(\oprob)$-$\limn \sqrt{[L^n, L^n]_T} = 0$.
\begin{thm} \label{thm: conv of pos loc marts to one}
Consider a sequence $(Z^n)_{n \in \Natural}$ of \cadlag \ processes, such that:
\begin{itemize}
  \item $Z^n_0 = 1$ and $Z^n_t > 0$ for all $t \in \Real_+$, $\oprob$-a.s., for all $n \in \Natural$.
  \item Each $Z^n$ is a $\bF^n$-local $\oprob$-martingale.
  \item $\plimn Z^n_T = 1$ for all $T \in \Real_+$.
\end{itemize}
Then, we have the following:
\begin{enumerate}
  \item $\Lb^1(\oprob)$-$\limn Z_T^n = 1$ for all $T \in \Real_+$.
  \item $\ucpn Z^n = 1$.
  \item $\plimn [Z^n, Z^n]_T = 0$ for all $T \in \Real_+$.
  \item $\plimn [R^n, R^n]_T = 0$ for all $T \in \Real_+$, where $R^n \, := \, \int_0^\cdot (1 / Z^n_{t-}) \ud Z^n_t$, i.e., $R^n$ is the \textsl{stochastic logarithm} of $Z^n$, for $n \in \Natural$.
\end{enumerate}
\end{thm}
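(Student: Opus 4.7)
My plan is to prove the four conclusions in sequence, since each builds on its predecessors. For (1), I would exploit that a nonnegative local martingale with $Z^n_0=1$ is automatically a supermartingale, so $\oexpec[Z^n_T]\le 1$; combined with $\plimn Z^n_T = 1$ and Fatou's lemma (giving $\liminf_n \oexpec[Z^n_T]\ge 1$), this forces $\oexpec[Z^n_T]\to 1 = \oexpec[1]$, after which Scheff\'e's lemma for nonnegative sequences with matching $L^1$-norms delivers $\Lb^1(\oprob)$-convergence at $T$. For (2), I would treat the two sides of the oscillation separately. The downward part follows from Doob's $L^1$-maximal inequality applied to the nonnegative bounded submartingale $(1-Z^n)^+$, yielding $\oprob[\sup_{t\le T}(1-Z^n_t)^+\ge\lambda]\le \lambda^{-1}\oexpec[(1-Z^n_T)^+]\to 0$ by (1). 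For the upward part, I apply optional stopping to the supermartingale $Z^n$ at $\tau^n_\epsilon := \inf\{t:Z^n_t\ge 1+\epsilon\}$ to obtain $\oexpec[Z^n_{T\wedge\tau^n_\epsilon}]\le 1$; decomposing on $\{\tau^n_\epsilon\le T\}$ (where $Z^n_{\tau^n_\epsilon}\ge 1+\epsilon$ by right-continuity) and using $\oexpec[Z^n_T]\to 1$ from (1) together with the identity $1-\oexpec[Z^n_T]+\oexpec[(Z^n_T-1)^+] = \oexpec[(1-Z^n_T)^+]$, a straightforward rearrangement yields $\epsilon\,\oprob[\tau^n_\epsilon\le T]\le\oexpec[(1-Z^n_T)^+]\to 0$.

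For (3), set $L^n:=Z^n-1$, a local martingale with $L^n_0=0$ for which (2) gives $L^{n,*}_T:=\sup_{t\le T}|L^n_t|\to 0$ in probability. The key observation is that $[L^n,L^n]$ is Lenglart-dominated by the adapted nondecreasing process $(L^{n,*})^2$: for any bounded stopping time $\tau$, passing through a reducing sequence together with Fatou's lemma yields $\oexpec[[L^n,L^n]_\tau] = \oexpec[(L^n_\tau)^2]\le\oexpec[(L^{n,*}_\tau)^2]$, which holds trivially when the right-hand side is infinite. Lenglart's inequality (which is valid for adapted nondecreasing dominators) then gives, for any $c,d>0$,
\[
\oprob\bigl[[L^n,L^n]_T\ge c\bigr] \;\le\; \tfrac{1}{c}\,\oexpec\bigl[(L^{n,*}_T)^2\wedge d\bigr] + \oprob\bigl[L^{n,*}_T\ge\sqrt{d}\bigr].
\]
The first term tends to zero by bounded convergence (the integrand is bounded by $d$ and tends to $0$ in probability), and the second by (2); hence $[Z^n,Z^n]_T = [L^n,L^n]_T\to 0$ in probability.

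For (4), the representation $R^n = \int_0^\cdot (Z^n_{t-})^{-1}\,dZ^n_t$ yields $[R^n,R^n]_T = \int_0^T (Z^n_{t-})^{-2}\,d[Z^n,Z^n]_t$. By (2), $\inf_{t\le T}Z^n_{t-}\to 1$ in probability (the left-limit process inherits uniform convergence), so on the event $\{\inf_{t\le T}Z^n_{t-}\ge 1/2\}$---whose probability tends to $1$---one has $(Z^n_{t-})^{-2}\le 4$ throughout $[0,T]$, giving $[R^n,R^n]_T\le 4[Z^n,Z^n]_T$ there; combined with (3), $\plimn[R^n,R^n]_T=0$. I expect (3) to be the main technical obstacle, and precisely because the delicate part is verifying the Lenglart domination for a general local martingale $L^n$ with potentially large jumps---this requires the reducing-sequence argument followed by a careful Fatou passage. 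Once that is in place, Lenglart's inequality neatly converts the ucp control from (2) into the quadratic-variation convergence, and (4) follows almost for free.
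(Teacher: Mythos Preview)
Your arguments for (1), (2), and (4) are correct. For (2), your route is in fact cleaner than the paper's: you dispatch the downward oscillation in one line via Doob's maximal inequality applied to the bounded submartingale $(1-Z^n)^+$, whereas the paper runs a separate conditional-probability argument. Your upward bound $\epsilon\,\oprob[\tau^n_\epsilon\le T]\le\oexpec[(1-Z^n_T)^+]$ is also a slicker packaging of the same optional-stopping idea the paper uses by contradiction.

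The gap is in (3). The form of Lenglart's inequality you invoke,
\[
\oprob\bigl[[L^n,L^n]_T\ge c\bigr]\ \le\ \tfrac{1}{c}\,\oexpec\bigl[(L^{n,*}_T)^2\wedge d\bigr]+\oprob\bigl[L^{n,*}_T\ge\sqrt d\bigr],
\]
is the \emph{predictable}-dominator version; for a merely adapted increasing dominator $A=(L^{n,*})^2$ the correct inequality carries an extra term $\tfrac{1}{c}\,\oexpec[\Delta A_{S}\,\indic_{\{S\le T\}}]$ with $S=\inf\{t:A_t\ge d\}$. Concretely, if you set $\sigma=\inf\{t:|L^n_t|\ge\sqrt d\}$ and run your argument, you arrive at $\oprob[[L^n,L^n]_T\ge c]\le \tfrac{1}{c}\oexpec[(L^{n,*}_{\sigma\wedge T})^2]+\oprob[\sigma\le T]$, and on $\{\sigma\le T\}$ the value $(L^{n,*}_\sigma)^2=|L^n_\sigma|^2$ can be arbitrarily large because $Z^n$ may jump far above $\sqrt d +1$ at time $\sigma$. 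Since $Z^n$ is only a nonnegative local martingale, $\sup_{t\le T}Z^n_t$ need not be square-integrable (or even integrable), so this overshoot term cannot be absorbed by your bounded-convergence step. Your domination $\oexpec[[L^n,L^n]_\tau]\le\oexpec[(L^{n,*}_\tau)^2]$ is fine; it is the passage through Lenglart that breaks.

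The paper handles exactly this overshoot issue by a different mechanism: it stops $Z^n$ \emph{strictly before} the first time it exceeds $2$, setting $Y^n_t=Z^n_{t\wedge\tau^n}-\Delta Z^n_{\tau^n}\indic_{\{\tau^n\le t\}}$, which forces $0\le Y^n\le 2$. This bounded supermartingale is then Doob--Meyer decomposed as $Y^n=-B^n+\widetilde L^n$; one checks that $B^n_T\to 0$ in $\oprob$-probability and that $\sup_{t\le T}|\widetilde L^n_t-1|$ is uniformly integrable, after which the Davis inequality gives $\oexpec\big[[\widetilde L^n,\widetilde L^n]_T^{1/2}\big]\to 0$. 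Since $\oprob[\tau^n=T]\to 1$ and $\Delta Z^n_{\tau^n}\to 0$ in probability by (2), this transfers to $[Z^n,Z^n]_T$. If you want to rescue your Lenglart approach, you would need either this ``stop-just-before'' truncation to make the dominator bounded, or an a priori bound on the upward jumps of $Z^n$ that you do not have.
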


\begin{proof}
\noindent \textbf{(1)} Since $\oexpec[Z^n_T] \leq 1$ for all $n \in \Natural$, it is a consequence of Fatou's lemma that $\limn \oexpec [Z_T^n] = 1$ for all $T \in \Real_+$. Theorem 16.14(ii), page 217 in \cite{MR1155402} implies the $\oprob$-uniform integrability of $(Z^n_T)_{n \in \Natural}$. We thus obtain $\Lb^1(\oprob)$-$\limn Z_T^n = 1$ for all $T \in \Real_+$.

\smallskip
\noindent \textbf{(2)} Fix $T \in \Real_+$. We first show that $\plimn \sup_{t \in [0, T]} Z_t^n = 1$; in the next paragraph we will establish that $\plimn \inf_{t \in [0, T]} Z_t^n = 1$, which completes the proof of the statement. Fix $\epsilon > 0$ and $T \in \Real_+$ and define the $\bF^n$-stopping-time $\tau^n :=  \inf \{t \in [0, T] \such Z^n_t > 1 + \epsilon \} \wedge T$ for all $n \in \Natural$. Since $\oexpec[Z^n_T] \leq \oexpec [Z^n_{\tau^n}] \leq 1$ by the optional sampling theorem (see for example \S1.3.C of \cite{MR1121940}), it follows that $\limn \oexpec [Z^n_{\tau^n}] = 1$. Showing that $\limn \oprob[\tau^n < T] = 0$ will imply that $\plimn \sup_{t \in [0, T]} Z^n_t = 1$, since $\epsilon > 0$ is arbitrary. Suppose on the contrary (passing to a subsequence if necessary) that $\limn \oprob[\tau^n < T] = \delta > 0$.  Then, since $\left| \oexpec \bra{ Z^n_{T} \indic_{\{ \tau^n = T \}} } - \oprob[\tau^n = T] \right| =  \left| \oexpec \bra{(Z_T^n - 1) \indic_{\{ \tau^n = T \}} } \right| \leq \oexpec [|Z_T^n - 1|]$,
and the last quantity converges to zero as $n \to \infty$, we get $\limn \oexpec \bra{ Z^n_{T} \indic_{\{ \tau^n = T \}} } = 1 - \delta$. In turn, this implies
\begin{eqnarray*}
  1 = \limn \oexpec [Z^n_{\tau^n}] &\geq& \liminf_{n \to \infty} \oexpec \big[ Z^n_{\tau^n} \indic_{\{ \tau^n < T \}} \big] + \limn \oexpec \big[ Z^n_{T} \indic_{\{ \tau^n = T \}} \big] \\
    &\geq& (1 + \epsilon) \delta + (1 - \delta) = 1 + \epsilon \delta,
\end{eqnarray*}
which contradicts the fact that $\delta > 0$. Thus, $\plimn \sup_{t \in [0, T]} Z_t^n = 1$ has been shown.

Now, to prove $\plimn \inf_{t \in [0, T]} Z_t^n = 1$ for fixed $T \in \Real_+$. Fix some $\epsilon > 0$, and for each $n \in \Natural$, redefine $\tau^n :=  \inf \{t \in [0, T] \such Z^n_t < 1 - \epsilon \} \wedge T$ --- we only need to show that $\limn \oprob[\tau^n < T] = 0$. Observe that on the event $\{ \tau^n < T \}$ we have $\oprob[Z^n_T > 1 - \epsilon^2 \such \F_{\tau^n}] \leq (1 - \epsilon) / (1 - \epsilon^2) = 1 / (1 + \epsilon)$. Then,
\[
\oprob[Z^n_T > 1 - \epsilon^2] = \oexpec \big[ \oprob[Z^n_T > 1 - \epsilon^2 \such \F_{\tau^n}] \big] \, \leq \, \oprob[\tau^n = T] + \oprob[\tau^n < T] \frac{1}{1+\epsilon}.
\]
Using $\oprob[\tau^n = T] = 1- \oprob[\tau^n < T]$, rearranging the previous inequality and taking the superior limit as $n \to \infty$, we get
\[
\limsup_{n \to \infty} \oprob[\tau^n < T] \ \leq \ \frac{1 + \epsilon}{\epsilon} \limsup_{n \to \infty} \oprob[Z_T^n \leq 1 - \epsilon^2] \ = \ 0,
\]
which completes the proof of statement (2).

\smallskip
\noindent \textbf{(3)} Fix some $T \in \Real_+$ and let $\tau^n := \inf \{ t \in \Real_+ \such Z_t^n > 2 \} \wedge T$; each $\tau^n$ is a $\bF^n$-stopping time. Let $Y^n$ be defined via $Y^n_t = Z^n_{t \wedge \tau^n} - \Delta Z^n_{\tau^n} \indic_{\{ \tau^n \leq t \}}$; in other words, $Y^n$ is the process $Z^n$ stopped \emph{just before} time $\tau^n$. Since $\Delta Z^n_{\tau^n} \geq 0$, $Y^n$ is a $(\oprob, \bF^n)$-supermartingale and $0 \leq Y^n \leq 2$ holds for all $n \in \Natural$. Since $\limn \oprob [ \tau^n = T ] = 1$, as well as $\plimn \Delta Z^n_{\tau^n} = 0$ holding in view of statement (2), for statement (3) to hold it suffices to show that $\plimn [Y^n, Y^n]_T = 0$. For each $n \in \Natural$, write $Y^n = - B^n + L^n$ for the Doob-Meyer decomposition of $Y^n$ under $(\bF^n, \oprob)$. Since, for each $n \in \Natural$, $Y^n$ is a uniformly bounded $(\bF^n, \oprob)$-supermartingale, $B^n$ is increasing, $\oprob$-integrable and $\bF^n$-predictable, while $L^n$ is a $(\bF^n, \oprob)$-martingale with $L^n_0=1$. Now, $L^n_0=1$, $L^n \geq Y^n$ and $\plimn Y^n_T = 1$ imply that $\plimn L^n_T = 1$; otherwise $\limsup_{n \to \infty} \oexpec [L^n_T] > 1$, which is impossible. Using $\plimn Y^n_T = 1$ and $\plimn L^n_T = 1$, we get $\plimn B^n_T = 0$. Note that both sequences $(Y^n_T)_{n \in \Natural}$ and $(L^n_T)_{n \in \Natural}$ are $\oprob$-uniformly integrable; the first because it is uniformly bounded; the second because it is actually converging in $\Lb^1(\oprob)$ according to statement (1) of this Theorem. This means that $(B^n_T)_{n \in \Natural} = (L^n_T - Y^n_T)_{n \in \Natural}$ is $\oprob$-uniformly integrable as well. Since $\sup_{t \in [0, T]} |L^n_t - 1 | \leq \sup_{t \in [0, T]} |Y^n_t - 1 | + B^n_T \leq 1  + B^n_T$, this further means that the collection $\big( \sup_{t \in [0, T]} |L^n_t - 1 | \big)_{n \in \Natural}$ is  $\prob$-uniformly integrable as well. As, by statement (2) of this Theorem, $\ucpn L^n = 1$, we actually have $\Lb^1(\oprob)$-$\limn \sup_{t \in [0, T]} |L^n_t - 1 | = 0$. The Davis inequality now gives $\Lb^1(\oprob)$-$\limn \sqrt{[L^n, L^n]_T} = 0$, which implies $\plimn [L^n, L^n]_T = 0$. Finally, since
\[
[B^n, B^n]_T - 2 [L^n, B^n]_T = - [B^n + 2 Y^n, B^n]_T \leq - 2[Y^n, B^n]_T = - 2 \sum_{t \in ]0, T]} \Delta Y^n_t \Delta B^n_t \leq 4 B^n_T,
\]
the last inequality holding because $\Delta Y^n \geq - 2$, we are able to estimate $[Y^n, Y^n]_T = [L^n, L^n]_T + [B^n, B^n]_T - 2 [L^n, B^n]_T \leq [L^n, L^n]_T + 4 B_T^n$. Therefore, $\plimn [Y^n, Y^n]_T = 0$, which finishes the proof of statement (3).

\smallskip
\noindent \textbf{(4)} Given statements (2) and (3), (4) readily follows since $[Z^n, Z^n] = \int_0^\cdot |Z^n_t|^2 \ud [R^n, R^n]_t$.
\end{proof}

\begin{rem}
Theorem \ref{thm: conv of pos loc marts to one} is valid under the weaker assumptions:
\begin{itemize}
  \item $Z^n_0 = 1$ and $Z^n_t \geq 0$ for all $t \in \Real_+$, $\oprob$-a.s., for all $n \in \Natural$.
  \item Each $Z^n$ is a $(\bF^n, \, \oprob)$-supermartingale.
  \item $\plimn Z^n_T = 1$ for all $T \in \Real_+$.
\end{itemize}
However, we have to make some sense of the stochastic logarithms $R^n$ in the case where $Z^n$ might become zero. For each $n \in \Natural$ and $\epsilon > 0$, define the $\bF^n$-stopping-time $\kt^n (\epsilon) \, := \, \inf \set{t \in \Real_+ \ | \ Z^n_t \leq \epsilon}$. There exists a $\bF^n$-local $\prob$-supermartingale $R^n(\epsilon)$ with $R_0^n(\epsilon) = 0$ such that $\ud Z^n_t = Z^n_t \ud R^n_t(\epsilon)$ for $t \in [0, \kt^n(\epsilon)]$. It is straightforward to see that for $\epsilon' < \epsilon$ we have $\kt^n(\epsilon) \leq \kt^n(\epsilon')$ and that $R^n_t(\epsilon) = R^n_t(\epsilon')$ for for $t \in [0, \kt^n(\epsilon)]$. We can then define a process $R^n$ on the stochastic interval $\Gamma^n \, := \, \bigcup_{\epsilon > 0} [0, \kt^n(\epsilon)]$ such that $\ud Z^n_t = Z^n_t \ud R^n_t$ for all $t \in \Gamma^n$; we call this $R^n$ the \textsl{extended stochastic logarithm} of $Z^n$. Since $\plimn Z^n_T = 1$ for all $T \in \Real_+$, we get that $\plimn \sup (\Gamma^n) = + \infty$; therefore, there is no problem in the pathwise definition of $R^n$ for compact intervals of $\Real_+$ as $n \to \infty$. In this sense, statement (4) of Theorem \ref{thm: conv of pos loc marts to one} follows.
\end{rem}

\begin{cor} \label{cor: convergence of zeta's to zero}
In the notation of \eqref{eq: hats} and \eqref{eq: tildes}, $\plimn \int_0^T |a^n_t - \tilde{a}^n_t|^2_{c_t} \, \ud G_t = 0$ holds for all $T \in \Real_+$.
\end{cor}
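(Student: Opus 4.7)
The plan is to invoke Theorem \ref{thm: conv of pos loc marts to one} applied to the density processes $(Z^n)_{n \in \Natural}$, and then translate the resulting convergence $\plimn [R^n, R^n]_T = 0$ into the desired convergence of $\int_0^T |\zeta^n_t|^2_{c_t} \ud G_t$ to zero, using the multiplicative decomposition \eqref{eq: R-N densities} and the already-noted identification $\zeta^n = a^n - \ta^n$.

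First I would verify the hypotheses of Theorem \ref{thm: conv of pos loc marts to one}. Strict positivity of $Z^n$ and the normalization $Z^n_0 = 1$ follow from \eqref{LOC-EQUIV}, and each $Z^n$ is an $\bF^n$-local $\oprob$-martingale by construction. The nontrivial hypothesis $\plimn Z^n_T = 1$ I would obtain from \eqref{P-CONV} as follows: Scheff\'e's lemma gives $\Lb^1(\oprob)$-convergence of $(\ud \prob^n / \ud \oprob)|_{\oF_T}$ to $1$; since $Z^n_T = \oexpec \bra{(\ud \prob^n / \ud \oprob)|_{\oF_T} \such \F^n_T}$ and conditional expectation is an $\Lb^1(\oprob)$-contraction, we obtain $\Lb^1(\oprob)$-$\limn Z^n_T = 1$, which is stronger than needed. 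Theorem \ref{thm: conv of pos loc marts to one}(4) then yields $\plimn [R^n, R^n]_T = 0$ for every $T \in \Real_+$.

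Next I would relate $[R^n, R^n]_T$ to the object of interest. Since $S$ is continuous, its $(\bF^n, \oprob)$-local-martingale part $M^{(\bF^n, \oprob)}$ is continuous as well, so $Y^n := \int_0^\cdot \zeta^n_t \ud M^{(\bF^n, \oprob)}_t$ is a continuous local martingale with $[Y^n, Y^n] = \int_0^\cdot |\zeta^n_t|^2_{c_t} \ud G_t$. Writing $N^n$ as its own stochastic exponential and applying Yor's formula to \eqref{eq: R-N densities}, together with $[Y^n, N^n] = 0$ (which follows from strong orthogonality of $M^{(\bF^n, \oprob)}$ and $N^n$), one gets
\[
R^n \, = \, Y^n \, + \, \int_0^\cdot \frac{1}{N^n_{t-}} \ud N^n_t,
\]
and the quadratic covariation of the two summands on the right vanishes. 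Hence
\[
[R^n, R^n]_T \, = \, \int_0^T |\zeta^n_t|^2_{c_t} \ud G_t \, + \, \int_0^T \frac{1}{(N^n_{t-})^2} \ud [N^n, N^n]_t \, \geq \, \int_0^T |\zeta^n_t|^2_{c_t} \ud G_t.
\]
Combined with $\plimn [R^n, R^n]_T = 0$ and the identity $\zeta^n = a^n - \ta^n$ coming from Girsanov's theorem, this delivers the corollary.

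The main obstacle is the verification of $\plimn Z^n_T = 1$: $Z^n$ lives on the \emph{smaller} filtration $\bF^n$ while \eqref{P-CONV} is stated on $\oF_T$, and the conditional-expectation argument via Scheff\'e is the cleanest bridge. Everything else is short bookkeeping in stochastic calculus exploiting the path-continuity of $S$, which is precisely what makes the $Y^n$-contribution to $[R^n, R^n]$ equal to the quantity we are trying to control.
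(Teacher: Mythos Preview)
Your proof is correct and follows the same approach as the paper: apply Theorem \ref{thm: conv of pos loc marts to one} to the density processes $Z^n$, then use the inequality $\int_0^T |\zeta^n_t|^2_{c_t}\,\ud G_t \leq [R^n,R^n]_T$ together with $\zeta^n = a^n - \ta^n$. You supply more detail than the paper in two places---the Scheff\'e/conditional-expectation bridge from $\oF_T$ to $\F^n_T$ to verify $\plimn Z^n_T = 1$, and the Yor-formula computation justifying the key inequality---but the underlying argument is identical.
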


\begin{proof}
For all $n \in \Natural$, $Z^n$ as defined in \eqref{eq: R-N densities} is a $(\bF^n, \, \oprob)$-martingale. \eqref{P-CONV} implies that $\plimn Z^n_T = 1$ holds for all $T \in \Real_+$. In the notation of Theorem \ref{thm: conv of pos loc marts to one}, we have $\int_0^\cdot |\zeta^n_t|^2_{c_t} \, \ud G_t \leq [R^n, R^n]$. The result follows because $\zeta^n = a^n - \ta^n$ for all $n \in \Natural$, and $\plimn [R^n, R^n]_T = 0$ holds for all $T \in \Real_+$ by Theorem \ref{thm: conv of pos loc marts to one}.
\end{proof}

\subsection{The proof of \eqref{eq: main 2}} We now finish the proof of Theorem \ref{thm: main} by showing \eqref{eq: main 2}. The semimartingale decomposition of $\log (\hX^n / \tX^n)$ under $(\obF, \oprob)$ reads
\[
\log \pare{ \frac{\hX^n}{\tX^n} } \, = \, \int_0^\cdot \pare{ \inner{\varphi_t^n - \tphi_t^n}{\oa_t}_{c_t} - \frac{1}{2} \pare{ |\varphi^n_t|^2_{c_t} - |\tphi^n_t|^2_{c_t} } } \ud G_t + \int_0^\cdot (\varphi_t^n - \tphi_t^n) \ud M_t^{(\obF, \oprob)}
\]
Since $a^n = \ta^n + \zeta^n$, statement (1) of Proposition \ref{prop: determin conc maxim} implies that $|\varphi^n - \tphi^n|_c \leq |\zeta^n|_c = |a^n - \ta^n|_c$. The quadratic variation of $\int_0^\cdot (\varphi_t^n - \tphi_t^n) \ud M_t^{(\obF, \oprob)}$ is equal to $\int_0^\cdot |\varphi_t^n - \tphi_t^n|^2_{c_t} \, \ud G_t \leq \int_0^\cdot |a_t^n - \ta_t^n|^2_{c_t} \, \ud G_t$. Therefore, Corollary \ref{cor: convergence of zeta's to zero} gives that
\[
\plimn \bra{ \int_0^\cdot (\varphi_t^n - \tphi_t^n) \ud M_t^{(\obF, \oprob)}, \, \int_0^\cdot (\varphi_t^n - \tphi_t^n) \ud M_t^{(\obF, \oprob)}}_T  = 0, \text{ for all } T \in \Real_+.
\]
Furthermore, for fixed $T \in \Real_+$,
\[
\oprob \bra{\limn \pare{\inner{\varphi_t^n - \tphi_t^n}{\oa_t}_{c_t} - \frac{1}{2} \pare{ |\varphi^n_t|^2_{c_t} - |\tphi^n_t|^2_{c_t} } } = 0, \text{ for } G \text{-a.e. } t \in [0, T] } = 1.
\]
One can then use the domination relationship $\left| \inner{\varphi^n - \tphi^n}{\oa}_{c} -  \pare{ |\varphi^n|^2_{c} - |\tphi^n|^2_{c} } / 2 \right| \, \leq \, 2 \og$
to actually get that $\plimn \int_0^T \left| \inner{\varphi_t^n - \tphi_t^n}{\oa_t}_{c_t} - \pare{ |\varphi^n_t|^2_{c_t} - |\tphi^n_t|^2_{c_t} } / 2\right| \, \ud G_t = 0$, for all $T \in \Real_+$, and finish the proof of \eqref{eq: main 2}.

\bibliographystyle{siam}
\bibliography{num_stab}

\begin{thebibliography}{10}

\bibitem{MR1717083}
{\sc C.~D. Aliprantis and K.~C. Border}, {\em Infinite-dimensional analysis ---
  A hitchhiker's guide}, Springer-Verlag, Berlin, second~ed., 1999.

\bibitem{MR2223957}
{\sc S.~Ankirchner, S.~Dereich, and P.~Imkeller}, {\em The {S}hannon
  information of filtrations and the additional logarithmic utility of
  insiders}, Ann. Probab., 34 (2006), pp.~743--778.

\bibitem{MR1183992}
{\sc J.-P. Ansel and C.~Stricker}, {\em Lois de martingale, densit\'es et
  d\'ecomposition de {F}\"ollmer-{S}chweizer}, Ann. Inst. H. Poincar\'e Probab.
  Statist., 28 (1992), pp.~375--392.

\bibitem{MR1849424}
{\sc D.~Becherer}, {\em The numeraire portfolio for unbounded semimartingales},
  Finance Stoch., 5 (2001), pp.~327--341.

\bibitem{MR1906715}
{\sc K.~Bichteler}, {\em Stochastic integration with jumps}, vol.~89 of
  Encyclopedia of Mathematics and its Applications, Cambridge University Press,
  Cambridge, 2002.

\bibitem{MR1189418}
{\sc J.~Cvitani{\'c} and I.~Karatzas}, {\em Convex duality in constrained
  portfolio optimization}, Ann. Appl. Probab., 2 (1992), pp.~767--818.

\bibitem{MR1384360}
{\sc F.~Delbaen and W.~Schachermayer}, {\em The existence of absolutely
  continuous local martingale measures}, Ann. Appl. Probab., 5 (1995),
  pp.~926--945.

\bibitem{MR544800}
{\sc M.~Emery}, {\em Une topologie sur l'espace des semimartingales}, in
  S\'eminaire de Probabilit\'es, XIII (Univ. Strasbourg, Strasbourg, 1977/78),
  vol.~721 of Lecture Notes in Math., Springer, Berlin, 1979, pp.~260--280.

\bibitem{MR1970286}
{\sc T.~Goll and J.~Kallsen}, {\em A complete explicit solution to the
  log-optimal portfolio problem}, Ann. Appl. Probab., 13 (2003), pp.~774--799.

\bibitem{MR1943877}
{\sc J.~Jacod and A.~N. Shiryaev}, {\em Limit theorems for stochastic
  processes}, vol.~288 of Grundlehren der Mathematischen Wissenschaften
  [Fundamental Principles of Mathematical Sciences], Springer-Verlag, Berlin,
  second~ed., 2003.

\bibitem{MR2335830}
{\sc I.~Karatzas and C.~Kardaras}, {\em The num\'eraire portfolio in
  semimartingale financial models}, Finance Stoch., 11 (2007), pp.~447--493.

\bibitem{MR1121940}
{\sc I.~Karatzas and S.~E. Shreve}, {\em Brownian motion and stochastic
  calculus}, vol.~113 of Graduate Texts in Mathematics, Springer-Verlag, New
  York, second~ed., 1991.

\bibitem{KarZit07}
{\sc C.~Kardaras and G.~{\v Z}itkovi\'{c}}, {\em Stability of the utility
  maximization problem with random endowment in incomplete markets}.
\newblock To appear in Mathematical Finance, 2007.

\bibitem{KELLY}
{\sc J.~L. Kelly, Jr}, {\em A new interpretation of information rate}, Bell
  System Technical Journal, 35 (1956), pp.~917--926.

\bibitem{MR1722287}
{\sc D.~Kramkov and W.~Schachermayer}, {\em The asymptotic elasticity of
  utility functions and optimal investment in incomplete markets}, Ann. Appl.
  Probab., 9 (1999), pp.~904--950.

\bibitem{LarZit06}
{\sc K.~Larsen and G.~{\v Z}itkovi{\' c}}, {\em Stability of
  utility-maximization in incomplete markets}, Stochastic Processes and their
  Applications, 117 (2007), pp.~1642--1662.

\bibitem{Long90}
{\sc J.~B.~J. Long}, {\em The num\'eraire portfolio}, Journal of Financial
  Economics, 26 (1990), pp.~29--69.

\bibitem{MR568256}
{\sc J.~M{\'e}min}, {\em Espaces de semi martingales et changement de
  probabilit\'e}, Z. Wahrsch. Verw. Gebiete, 52 (1980), pp.~9--39.

\bibitem{MR1353193}
{\sc M.~Schweizer}, {\em On the minimal martingale measure and the
  {F}\"ollmer-{S}chweizer decomposition}, Stochastic Anal. Appl., 13 (1995),
  pp.~573--599.

\bibitem{MR1155402}
{\sc D.~Williams}, {\em Probability with martingales}, Cambridge Mathematical
  Textbooks, Cambridge University Press, Cambridge, 1991.

\end{thebibliography}
\end{document}